\providecommand{\tabularnewline}{\\}
  \theoremstyle{definition}
  \newtheorem{defn}{\protect\definitionname}[section]
  \theoremstyle{plain}
  \newtheorem{prop}{\protect\propositionname}[section]
  \theoremstyle{plain}
  \newtheorem{thm}{\protect\theoremname}[section]
  \theoremstyle{plain}
  \newtheorem{cor}{\protect\corollaryname}[section]
  \theoremstyle{plain}
  \newtheorem{lem}{\protect\lemmaname}[section]
  \theoremstyle{plain}
  \newtheorem{rem}{\protect\remarkname}[section]
  \theoremstyle{plain}
  \theoremstyle{plain}
  \newtheorem*{con*}{\protect\conditionname}
  \theoremstyle{plain}
\newcommand{\nocontentsline}[3]{}
\newcommand{\tocless}[2]{\bgroup\let\addcontentsline=\nocontentsline#1{#2}\egroup}
\newcommand{\com}[1]{}
\providecommand{\corollaryname}{Corollary}
\providecommand{\definitionname}{Definition}
\providecommand{\lemmaname}{Lemma}
\providecommand{\propositionname}{Proposition}
  \providecommand{\corollaryname}{Corollary}
  \providecommand{\definitionname}{Definition}
  \providecommand{\lemmaname}{Lemma}
  \providecommand{\propositionname}{Proposition}
  \providecommand{\remarkname}{Remark}
  \providecommand{\theoremname}{Theorem}
  \providecommand{\conditionname}{Condition}
\begin{document}
\begin{frontmatter}{}

\title{Strategic Decompositions of Normal Form Games: Zero-sum Games and Potential Games\tnoteref{t1}}

\tnotetext[t1]{\today.
\setstretch{0.8}
The research of S.-H. H. was supported by
the Ministry of Education of the Republic of Korea and the National Research Foundation of Korea (NRF-2016S1A3A2924944). The research of L. R.-B. was supported
by the US National Science Foundation (DMS-1109316). We greatly appreciate comments by the advisory editor and two anonymous referees. Especially, we would like to express special thanks to the anonymous referee who gave detailed suggestions on the revision. We also would like
to thank Murali Agastya, Samuel Bowles, Yves Gu\'{e}ron, Nayoung Kim,
Suresh Naidu, Jonathan Newton, David Nielsen and Seung-Yun Oh for their helpful
comments. Special thanks to William Sandholm who read the early version
(\url{http://arxiv.org/abs/1106.3552}) of this paper carefully and
gave us many helpful suggestions.}

\author[A1]{Sung-Ha Hwang\corref{cor1}}

\ead{sungha@kaist.ac.kr}

\author[A2]{Luc Rey-Bellet}

\ead{luc@math.umass.edu }

\cortext[cor1]{Corresponding author. }

\address[A1]{Korea Advanced Institute of Science and Technology (KAIST), Seoul, Korea}

\address[A2]{Department of Mathematics and Statistics, University of Massachusetts
Amherst, MA, U.S.A.}

%\author[SHH]{Sung-Ha Hwang\corref{cor1}}%\ead{hwang@math.umass.edu}%\address[SHH]{Department of Mathematics and Statistics, University of Massachusetts at%Amherst, MA, 01003-9305, U.S.A.}
\begin{abstract}
\setstretch{0.75}
We introduce new classes of games, called \emph{zero-sum equivalent games} and \emph{zero-sum equivalent potential games}, and prove decomposition theorems involving these classes of games. We say that two games are ``strategically equivalent'' if, for every player, the payoff differences between two strategies (holding other players' strategies fixed) are identical. A zero-sum equivalent game is a game that is strategically equivalent to a zero-sum game; a zero-sum equivalent potential game is a potential game that is strategically equivalent to a zero-sum game. We also call a game ``normalized'' if the sum of one player's payoffs, given the other players' strategies, is zero. We present decomposition results involving these games as component games and study the equilibrium properties of these new games. One of our main results shows that any normal form game, whether the strategy set is finite or continuous, can be uniquely decomposed into a zero-sum normalized game, a zero-sum equivalent potential game, and an identical interest normalized game, each with distinctive equilibrium properties. We also show that two-player zero-sum equivalent games with finite strategy sets generically have a unique Nash equilibrium and that two-player zero-sum equivalent potential games with finite strategy sets generically have a strictly dominant Nash equilibrium.
\end{abstract}
\begin{keyword}
decomposition, zero-sum games, potential games.\\
\textbf{JEL Classification Numbers:} C72, C73
\end{keyword}

\end{frontmatter}

\thispagestyle{empty}

\newpage

\tocless \section{Introduction\setcounter{page}{1}}

When two people start a joint venture, their interests are aligned.
In the division of a pie or unclaimed surpluses, however, someone's gain always comes at the expense of somebody else. So-called identical interest games---games in which all players have the same payoff function---and zero-sum games serve as polar models for studying these social interactions. Two games can be regarded as ``strategically equivalent'' if, for every player, the payoff differences between two strategies (holding other players' strategies fixed) are identical. That is, in two strategically equivalent games, strategic variables such as best responses
of players are the same.\footnote{See Definition \ref{def:st-eq} and \citet{Monderer96}
and \citet{Weibull95}; see also \citet{Morris04} for best response
equivalence.} Potential games---a much-studied class of games in the  literature---are precisely those games that are strategically equivalent to identical interest games. We also introduce a class of games, called ``normalized'' games, in which
the sum of one player's payoffs, given the other players' strategies,
is always zero.

We are interested in zero-sum games and their variants---(i) games that are strategically equivalent to a zero-sum game, accordingly named \emph{zero-sum equivalent games}, (ii) potential games that are strategically equivalent to a zero-sum game, named \emph{zero-sum equivalent potential games} and (iii) \emph{zero-sum normalized games}. Our interest in zero-sum equivalent games is motivated by their definition being analogous to the definition of potential games. Potential games retain all the attractive properties of identical interest games (e.g., the existence of a potential function) because they are strategically equivalent to identical interest games. Thus, zero-sum equivalent games are expected to possess similar desirable properties of zero-sum games as well. It is well-known that two-player zero-sum games with a finite number of strategies have a mini-max solution which is the same as a Nash equilibrium.\footnote{Recently, there have been generalizations of these properties and characterizations for a special class of $n$-player zero-sum games (\citet{Bregman88, Cai15}).}

To examine these classes of zero-sum related games more systematically, we develop decomposition methods of normal form games and obtain several constituent components belonging to these classes. We also study the equilibrium properties of these new games such as the uniqueness/convexity of Nash equilibria and the existence of a dominant equilibrium. Based on these, we provide two specific applications: (i) equilibrium analysis of two-player finite strategy games and (ii) equilibrium analysis of contest games. In the first application, we illustrate  that decomposition can isolate the effect of component games on the Nash equilibrium of the original game (see Figures \ref{fig:dep-eq} and \ref{fig:perturb}) and show that  the total number of Nash equilibria of a given game, depending on some conditions in terms of its decomposition component games, can be maximal or minimal. In the second application,  the uniqueness of Nash equilibria for rent-seeking games is shown by using the special property of zero-sum equivalent games. This highlights that identifying a zero-sum equivalent game via decomposition facilitates equilibrium analysis.

\begin{table}
\begin{tabular}{|c|c|c|}
\hline
4,4$ $ & -1,1 & 1,-1$ $\tabularnewline
\hline
1,-1 & 2,2 & -2,0$ $\tabularnewline
\hline
-1,1 & 0,-2 & 2,2\tabularnewline
\hline
\end{tabular}\medskip

=$\underbrace{%
\begin{tabular}{|c|c|c|}
\hline
2,2$ $ & -1,-1$ $ & -1,-1$ $\tabularnewline
\hline
-1,-1 & 2,2 & -1,-1\tabularnewline
\hline
-1,-1 & -1,-1 & 2,2\tabularnewline
\hline
\end{tabular}}_{=I}$
+$\underbrace{
\begin{tabular}{|c|c|c|}
\hline
0,0 & -1,1 & 1,-1\tabularnewline
\hline
1,-1 & 0,0 & -1,1\tabularnewline
\hline
-1,1 & 1,-1$ $ & 0,0\tabularnewline
\hline
\end{tabular}}_{=Z}$
+$\underbrace{
\begin{tabular}{|c|c|c|}
\hline
1,1 & 1,0 & 1,0\tabularnewline
\hline
0,1 & 0,0 & 0,0\tabularnewline
\hline
0,1 & 0,0 & 0,0\tabularnewline
\hline
\end{tabular}}_{=B}$
+$\underbrace{
\begin{tabular}{|c|c|c|}
\hline
1,1 & 0,1 & 0,1\tabularnewline
\hline
1,0 & 0,0 & 0,0\tabularnewline
\hline
1,0 & 0,0 & 0,0\tabularnewline
\hline
\end{tabular}}_{=E}$

\protect\protect\caption{\textbf{Illustration of a decomposition.} This example
illustrates one of our main results. Here, $I$ is an identical interest game,
$Z$ is a zero-sum game (the Rock-Paper-Scissors game), $B$ is a
game in which the first strategy is the dominant strategy and $E$
is called a ``non-strategic'' game, in which, for every player, the payoff
differences between two strategies (holding other players' strategies
fixed) are identical. Observe that $I$ and $Z$
are ``normalized'' in the sense that the column sums and row
sums of the payoffs are zeros. }
\label{tab:1}
\end{table}

One of our main decomposition results (Theorem \ref{thm:main}) shows that any normal form game, whether the strategy set is finite or continuous, can be decomposed as follows: (i) an  identical interest ``normalized''
component ($I$ in Table \ref{tab:1}), (ii) a  zero-sum ``normalized''
component ($Z$ in Table \ref{tab:1}), (iii) a zero-sum equivalent potential component---component equivalent to both a zero-sum and identical interest game ($B$ in Table \ref{tab:1}) and (iv) a nonstrategic component ($E$ in Table \ref{tab:1}). Most popular zero-sum games, such as Rock-Paper-Scissors
games and Matching Pennies games, belong to the class of
zero-sum normalized games (see also Cyclic games in \citet{Hofbauer00}).

This study makes the following contributions.
First, we develop a more general way of decomposing normal form games than existing methods. Existing decomposition methods of normal form games, such as in \citet{Sandholm10}, \citet{Candogan2011} and \citet{HandR11}, are limited to finite strategy set games, relying on decomposition methods of tensors (or matrices) or graphs. Our new insights lie in viewing the set of all games as an abstract space---a vector space.
In the vector space of games, we decompose a game into a constituent game and its algebraic complement, and our decompositions correspond to  direct sum representations of the vector space (Theorem \ref{thm:main}). In this space, commuting projection mappings are used to single out subspaces and their algebraic complementary subspaces.\footnote{ \citet{Sandholm10} also uses orthogonal projections to obtain decomposition components; however, the orthogonal projections in that paper cannot extract potential games and zero-sum equilibrium.
The decomposition by \citet{Candogan2011} relies on the Helmholtz decomposition of flows on the graph for games with finite strategy sets and uses Moore inverses of matrix operators. They also provide orthogonal projections onto the potential component space and harmonic component space. However, their definition of potential component games is different from the definition by \citet{Monderer96}(see Appendix \ref{sec:existing}). Besides, decomposition results of these two existing studies are limited to finite strategy games} This approach allows us to obtain decomposition results which hold for an arbitrary normal form game, whether finite or continuous (Theorem \ref{thm:main}). In this way, our method shows a unified and transparent mechanism of decompositions of games, which may be modified to decompose other classes of games.
\com{
In the Hilbert space of games, we decompose games into a constituent game and its orthogonal complement (Theorem \ref{thm:main} (iii), (iv)). A representation of a Hilbert space as an orthogonal sum is useful in the sense that we can naturally characterize a class of games by examining their orthogonal complements. For example, the sufficiency
and necessity of the well-known Monderer and Shapley cycle condition for potential games (Theorem 2.8 in \citet{Monderer96}) can be proved by showing that this condition requires that potential games are orthogonal to all zero-sum normalized games \citep{HandR11}. In addition, the orthogonal structure also allows us to extend the vector space decomposition result, Theorem \ref{thm:main} (ii), to the class of $n>2$ player games with finite strategy games. Our Hilbert space decomposition results also hold for continuous payoff function games (Theorem \ref{thm:main} (iv)). The extension of these Hilbert space results to discontinuous payoff function games involves subtle issues related with the identification of payoff functions that are equal almost everywhere, and we discuss issues in more detail in Section \ref{sec:con}.
}

Second, we provide explicit expressions for projections whose ranges and kernels are subspaces of potential games, zero-sum normalized games, identical interest normalized games, zero-sum equivalent games, and zero-sum equivalent potential games. These explicit formulas provide algorithms to extract the component games from a given game as well as yield tests for potential games and zero-sum equivalent games. Then, the extracted components game can be used to understand the equilibrium properties of the original games (see Section \ref{sec:app}). \com{In this way, our decompositions systematically reveal the underlying equilibrium structure of games of interest. near potential games, explicit formula}

Third, to derive explicit expressions for projections, we find useful characterizations of various games. In particular, we provide \emph{new} characterizations (to our knowledge) of potential games (Proposition \ref{prop:pot}, equations \eqref{eq:pot-con2}, \eqref{eq:potent-char}). These new characterizations, along with existing tests and characterizations, may be used to shed light on the structures of potential games---the topic that we leave for future research.

The remainder of this paper is organized as follows. In Section \ref{sec:main-thm}, we present a basic setup and the main decomposition results, along with an example illustrating strategic equivalence.
Section \ref{sec:zero-eq} presents our results on the equilibrium properties of component games. In Section \ref{sec:app}, we provide the two applications and Section \ref{sec:con} concludes the paper. To streamline the presentation, we relegate most of the proofs to the appendix. Also, in Appendix \ref{sec:existing}, we compare our results to existing decomposition results.

\medskip

\tocless \section{Decomposition Theorems \label{sec:main-thm}}

\tocless \subsection{Basic setup: payoff function space \label{subsec:semi-norm}}
Consider a collection of measurable spaces $S_i$ with a $\sigma$-algebra for $i=1, \cdots, n$. Let the product measurable space with the product $\sigma$-algebra be $S:= \prod_{i=1}^{n} S_i$.  For each $i$, $f^{(i)}:S\to\mathbb{R}$ is a real-valued measurable function and  let $f:=(f^{(1)},$ $f^{(2)},\cdots,f^{(n)})$. An  $n$-player game $(N,S,f)$ is specified by a set of players, $N=\{1,2,\cdots,n\}$; a set of strategy profiles, $S$; and a
payoff function, $f$. Thus, given $N$ and $S$, a game is uniquely identified by a vector-valued
payoff function $f$. For each $i$, let $m_i$ be a measure on $S_i$ and $m$ be the product measure, $dm=\prod_{i=1}^n m_i$.  We succinctly write an $n$-fold integration as follows:
\[
	\int f^{(i)} dm = \int\cdots\int f^{(i)}(s_{1},\cdots,s_{n}) dm_{1}(s_{1})\cdots dm_{n}(s_{n}).
\]
We will consider a set of games with integrable payoff functions and thus define a semi-norm of a payoff function,$\left \Vert \cdot \right \Vert$, as follows
\[
    \left \Vert f \right \Vert := \sum_{i=1}^n \int |f^{(i)}| dm
\]
and consider the following vector space of games:
\begin{equation} \label{eq:L}
\mathcal{L}:=\{f:S\to\mathbb{R}^{n}\textnormal{ measurable and }\left\Vert f\right\Vert <\infty\}.
\end{equation}
 \com{
Since we are interested in square-integrable payoff functions, we introduce a semi-norm of a payoff function and a scalar product of payoff functions as follows:
\begin{equation}\label{eq:semi-norm}
  \left\Vert f\right\Vert :=(\sum_{i=1}^{n}\int\left\vert f^{(i)}\right\vert ^{2}dm)^{1/2},\,\,\,\,\,\left\langle f,g\right\rangle :=\sum_{i}\int f^{(i)}g^{(i)}dm.
\end{equation}

and becomes a Hilbert space under identification of $m$-almost everywhere same functions. In Section \ref{subsec:semi-norm}, we view $\mathcal{L}$ as a vector space.}
 \begin{table}
\center
\scalefont{0.95}
\begin{tabular}{c|l}
\hline
Notation &  Name  \tabularnewline
\hline
$\mathcal{I}$ & \textbf{I}dentical interest games \tabularnewline
$\mathcal{Z}$ & \textbf{Z}ero-sum games \tabularnewline
$\mathcal{N}$ & \textbf{N}ormalized games \tabularnewline
$\mathcal{E}$ & non-strategic games (define \textbf{E}quivalence relation) \tabularnewline
\hline
$\mathcal{Z}+\mathcal{E}$ & \textbf{Z}ero-sum \textbf{E}quivalent games \tabularnewline
$\mathcal{I}+\mathcal{E}$ & potential games (\textbf{I}dentical interest \textbf{E}quivalent games) \tabularnewline
\hline
$\mathcal{B} =$ & \textbf (\textbf{B}oth)  zero-sum equivalent (and) potential games \tabularnewline
$(\mathcal{Z}+\mathcal{E})\cap(\mathcal{I}+\mathcal{E})$   \tabularnewline
\hline
%?? $\mathcal{D}$ & games which have a \textbf{D}ominant strategy for two-player games & Definition \ref{def: multilater} \tabularnewline
%\hline
$\mathcal{Z\cap N}$ &  \textbf{Z}ero-sum \textbf{N}ormalized games  \tabularnewline
$\mathcal{I \cap N}$ & \textbf{I}dentical interest \textbf{N}ormalized  games  \tabularnewline
\hline
\end{tabular}
\caption{\textbf{Notation}}
\label{tab:notation}
\end{table}

We assume that $S_i$ is either a finite set or a subset of $\mathbb{R}$ and associate a specific measure with each case as follows. If $S_i$ is a  finite set, we suppose that $m_{i}$ is the counting measure with the natural $\sigma$-algebra of all subsets of $S_i$. If $S_{i}$ is a subset of $\mathbb{R}$, we assume that $S_{i}$ is bounded and choose $m_{i}$ to be the Lebesgue measure restriction on the Borel $\sigma$-algebra of $S_i$. We call a game with finite numbers of strategies a \emph{finite game} and a game with continuous strategy spaces a \emph{continuous game}.
We next introduce several classes of games of interest. Following the standard convention, $s_{-i}$ denotes the strategy profile of all players expect player $i$.

\medskip

\begin{defn} \label{def:space}
We define the following subspaces of $\mathcal{L}$: \\
(i) The space of {\em identical interest games}, $\mathcal{I}$, is defined
by

\setstretch{0.7}
\[
\mathcal{I}:=\{ f\in\mathcal{L}:f^{(i)}(s)=f^{(j)}(s) \text{ for all }i,j   \text{  and for all } s \}\footnote{A common interest game is sometimes used to refer to an identical interest game (see \citet{Sandholm10Book}). However, \citet{Aumann89} define a common interest game as a game that has a single payoff pair that strongly Pareto dominates all other payoff pairs. We appreciate an anonymous referee for pointing out this.} \,.
\]
(ii) The space of {\em zero-sum games}, $\mathcal{Z}$, is defined by
\[
\mathcal{Z}:= \{ f\in\mathcal{L}:\sum_{l=1}^{n}f^{(l)}(s)=0 \text{ for all } s \} \,.
\]
(iii) The space of {\em normalized games},  $\mathcal{N}$, is defined by
\begin{align}\label{eq:normalize}
\mathcal{N}:=\{ f\in\mathcal{L}:\int f^{(i)}(t_{i},s_{-i})dm_{i}(t_{i})=0 \text{ for all } s_{-i}, \text{ for all }i \} \,.
\end{align}
(iv) The space of {\em nonstrategic games}, $\mathcal{E}$, is defined by
\begin{align}\label{eq:passive}
\mathcal{E}:= \{ f \in\mathcal{L} : f^{(i)}(s_i,s_{-i}) = f^{(i)}(s'_i,s_{-i}) \text{ for all } s_i, s'_i, s_{-i}, \text{ for all } i  \} \,.
\end{align}
\end{defn}

\noindent Identical interest games in $\mathcal{I}$ and zero-sum games in $\mathcal{Z}$ are familiar games, with cooperative and competitive interactions, respectively.\com{ and tractable\com{straightforward}
equilibrium analysis is  possible in both classes of games via potential functions and mini-max solutions.} A normalized game is a game in which the sum of one player's payoffs, given the other players' strategies,
is always zero. A non-strategic game in $\mathcal{E}$, (also sometimes called a passive game), is a game in which each player's payoff does not depend on his own strategy choice (\citet{Sandholm10, Candogan2011}).
Thus, each player's  strategy choice plays no role in determining her payoff.  Because of this property,  the players'  strategic relations remain unchanged if we add the payoff of a non-strategic game to that of another game.  This leads us to
the definition of  strategic equivalence (and thus notation, $\mathcal{E}$).\footnote{One can study different strategic equivalences. For example, \citet{Monderer96} introduce the
concept of $w-$ potential games in which the payoff changes are proportional for each player. \citet{Morris04}
also study the best response equivalence of games in which players have the same best-responses. We choose
our definition of strategic equivalence since it is most natural with the linear structure of the space of all games.}

\begin{defn} \label{def:st-eq}
\setstretch{0.8}
We say that game $g$ is \emph{strategically equivalent} to game $f$
if
\[
g=f+h\textnormal{ for some }h\in\mathcal{E}
\]
We write this relation as $g\sim f$.
\end{defn}

Note the following simple, but useful, characterization for non-strategic games:  a function does not depend on a variable if and only if the value of the integral of the
function with respect to that variable gives the same value of the function. We write $|S_i|:=m_{i}(S_{i})$.
\begin{lem} \label{lem:non-st}
A game $f$ is a non-strategic game if and only if
    \begin{equation}\label{eq:pass}
        f^{(i)}(s)=\frac{1}{|S_{i}|}\int f^{(i)}(t_{i},s_{-i})dm_{i}(t_{i}) \text{ for all }i, \text{ for all } s \,.
    \end{equation}
\end{lem}
\begin{proof}
Suppose that $f$ satisfies \eqref{eq:pass}. Then, clearly, $f^{(i)}$ does not depend on $s_i$ for all $i$. Now let $f \in \mathcal{E}$. Then there exist $\zeta$ such that $f^{(i)}(s)=\zeta^{(i)}(s_{-i})$ for all $s$, which does not depend on $s_i$, for all $i$. Thus, by integrating, we see that \eqref{eq:pass} holds.
\end{proof}

As mentioned, it is well-known that two-player zero-sum games have desirable properties and thus, in addition to potential games, we also consider a class of zero-sum equivalent games---games that are strategically equivalent to zero-sum games and a class of zero-sum equivalent potential games---potential games that are strategically equivalent to zero-sum games. Given two subspaces $\mathcal{A}$ and $\mathcal{A}'$, the sum of two subspaces is defined to be
\[
    \mathcal{A} + \mathcal{A}':= \{ f+f' :\,\,f \in \mathcal{A},\,\, f' \in \mathcal{A}' \}.
\]

\begin{defn}\label{def:pot-zero}
We have the following definitions: \\
(i) The space of {\em potential games} (identical interest equivalent games) is defined by
\setstretch{0.9}
\begin{align}
\mathcal{I}+\mathcal{E}.
\end{align}
(ii) The space of {\em zero-sum equivalent games} is defined by
\begin{align}
\mathcal{Z}+\mathcal{E}
\end{align}
(iii) The space of games that is strategically equivalent
to both an identical interest game and a zero-sum game, called {\em zero-sum equivalent potential games}, is denoted by $\mathcal{B}$:
\[
    \mathcal{B} = (\mathcal{I} + \mathcal{E}) \cap  (\mathcal{Z} + \mathcal{E})
\]
\end{defn}
The following example illustrates strategic equivalence.
\bigskip

\noindent \textbf{Example (Strategic equivalence: Cournot oligopoly).}

\noindent Consider a quasi-Cournot oligopoly game with a linear demand function for which the payoff function for the $i$-th player, for $i=1,\cdots,n$, is given by
\[
f^{(i)}(s)=(\alpha-\beta\sum_{j=1}^{n}s_{j})s_{i}-c_{i}(s_{i})\,,
\]
where $\alpha, \beta > 0$, $c_i(s_i) \geq 0$ for all $s_i \in [0, \bar s]$ for all $i$ and for some sufficiently large $\bar s$.\footnote{Here, quasi-Cournot games allow the negativity of the price \citep{Monderer96}. Further, we can choose $\bar s$ to ensure that the unique Nash equilibrium lies in the interior $[0, \bar s]$ as follows. Suppose that $c_i(s_i)$ is linear; that is, $c_i(s_i) = c_i s_i$ for all $i$. We assume that $\alpha > n \min_i c_i - (n-1) \max_i c_i$, which ensures that the Nash equilibrium, $s^*$, is positive. If we choose $\bar s$ such that $(n+1) \beta \bar s > \alpha - n \min_i c_i + (n-1) \max_i c_i$, then $s^*_i \in (0, \bar s)$ for all $i$.
\com{Should I discuss the meaning of this condition and existence?} }
It is well-known that this game is a potential game (\citealt{Monderer96}); i.e., it  is strategically equivalent to an identical interest game.  But it is also strategically equivalent to a zero sum game (if $n \ge 3$) as follows.  To show this, when $n=3$, we write the payoff function as
\begin{equation}\label{eq:cournot-1}
  \begin{pmatrix}f^{(1)}\\
f^{(2)}\\
f^{(3)}
\end{pmatrix}=\underbrace{\begin{pmatrix}
(\alpha-\beta s_{1})s_{1}-c_{1}(s_{1})\\
(\alpha-\beta s_{2})s_{2}-c_{2}(s_{2})\\
(\alpha-\beta s_{3})s_{3}-c_{3}(s_{3})
\end{pmatrix}}_{\text{Self-interaction}}-\underbrace{\begin{pmatrix}\beta s_{1}s_{2}\\
\beta s_{1}s_{2}\\
0
\end{pmatrix}}_{\substack{\text{Interactions}\\
\text{between players 1 and 2}
}
}-\underbrace{\begin{pmatrix}\beta s_{1}s_{3}\\
0\\
\beta s_{1}s_{3}
\end{pmatrix}}_{\substack{\text{Interactions}\\
\text{between players 1 and 3}
}
}-\underbrace{\begin{pmatrix}0\\
\beta s_{2}s_{3}\\
\beta s_{2}s_{3}
\end{pmatrix}}_{\substack{\text{Interactions}\\
\text{between players 2 and 3}
}
}
\end{equation}

The self-interaction term  is strategically equivalent to both an identical interest game and a zero-sum game, as the following two payoffs show
\begin{equation} \label{eq:cournot-eq1}
\begin{pmatrix}
&\boldsymbol{(\alpha-\beta s_{1})s_{1}-c_{1}(s_{1})}&+(\alpha-\beta s_{2})s_{2}-c_{2}(s_{2})&+(\alpha-\beta s_{3})s_{3}-c_{3}(s_{3})\\
&(\alpha-\beta s_{1})s_{1}-c_{1}(s_{1})&+\boldsymbol{(\alpha-\beta s_{2})s_{2}-c_{2}(s_{2})}&+(\alpha-\beta s_{3})s_{3}-c_{3}(s_{3})\\
&(\alpha-\beta s_{1})s_{1}-c_{1}(s_{1})&+(\alpha-\beta s_{2})s_{2}-c_{2}(s_{2})&+\boldsymbol{(\alpha-\beta s_{3})s_{3}-c_{3}(s_{3})}
\end{pmatrix}
\end{equation}
and
\begin{equation} \label{eq:cournot-eq2}
\begin{pmatrix}\boldsymbol{(\alpha-\beta s_{1})s_{1}-c_{1}(s_{1})}-[(\alpha-\beta s_{2})s_{2}-c_{2}(s_{2})]\\
\boldsymbol{(\alpha-\beta s_{2})s_{2}-c_{2}(s_{2})}-[(\alpha-\beta s_{3})s_{3}-c_{3}(s_{3})]\\
\boldsymbol{(\alpha-\beta s_{3})s_{3}-c_{3}(s_{3})}-[(\alpha-\beta s_{1})s_{1}-c_{1}(s_{1})]
\end{pmatrix}.
\end{equation}
The payoffs  in \eqref{eq:cournot-eq1} and \eqref{eq:cournot-eq2} are payoffs for an identical interest game and a zero-sum game, respectively.
They are obtained from the self-interaction term by adding payoffs that do not depend on the player's own strategy, and thus are strategically  equivalent (see Definition  \ref{def:st-eq}).

In a similar way, the payoff component describing
the interactions between players $1$ and $2$ is strategically equivalent
to the payoff for an identical interest game and the payoff for a zero-sum
game. For example,
\begin{equation} \label{eq:cournot-eq3}
\begin{pmatrix}
\boldsymbol{\beta s_{1}s_{2}}\\
\boldsymbol{\beta s_{1}s_{2}}\\
0
\end{pmatrix},\,\,
\begin{pmatrix}
\boldsymbol{\beta s_{1}s_{2}}\\
\boldsymbol{\beta s_{1}s_{2}}\\
\beta s_{1}s_{2}
\end{pmatrix},\,\,
\text{ and }
\begin{pmatrix}
\boldsymbol{\beta s_{1}s_{2}}\\
\boldsymbol{\beta s_{1}s_{2}}\\
-2\beta s_{1}s_{2}
\end{pmatrix}
\end{equation}
are all strategically equivalent.  A similar computation holds for the last two terms in equation \eqref{eq:cournot-1} involving the interactions between players $1$ and $3$ as well as between players $2$ and $3$. As a consequence, the quasi-Cournot oligopoly game is strategically equivalent
to both an identical interest game and a zero-sum game.

\tocless \subsection{Decomposition results \label{subsec:vec-dec}}

In this section, we present our main decomposition results. For the convenience of readers, we also present some basic properties of projection operators in a vector space in Appendix \ref{appen:decomp_games}. In the context of game theory, the following two kinds of decompositions receive much attention in the literature: (i)  identical interest  games versus zero-sum games (\citet{Kalai10}) and (ii) normalized games versus non-strategic games (\citet{HandR11, Candogan2011}):
\begin{equation}\label{eq:1st-two-d1}
  \text{(i)}\,\,\, \mathcal{L}=\mathcal{I}\ensuremath{\oplus}\mathcal{Z}, \qquad \text{(ii)}\,\,\, \mathcal{L}=\mathcal{N}\oplus\mathcal{E}
\end{equation}
\com{(see the appendix Corollary in \ref{cor:pre-decomp}.)
Thus, zero-sum games (normalized games) are algebraic complements of identical interest games (non-strategic games) and vice versa.}
where $\oplus$ denotes the direct sum in which every element in $\mathcal{L}$ can be \emph{uniquely} written as the sum of one element in $\mathcal{I}$ (or $\mathcal{N}$) and another element in $\mathcal{Z}$ (or $\mathcal{E}$).

To explain how projections on $\mathcal{L}$ induce decompositions in \eqref{eq:1st-two-d1}, we first introduce  an operator which averages one player's payoffs with equal weights, given all other players' strategies. More precisely, for a scalar valued function $u: S \rightarrow \mathbb{R}$, we introduce operator $T_i$ which acts on scalar valued functions:
\begin{equation}\label{eq:def-t}
  (T_i u)(s) = \frac{1}{|S_i|} \int u(t_i, s_{-i}) dm_i(t_i)
\end{equation}
for each $i$ (see Lemma \ref{lem:non-st}). Then $T_i u$ does not depend on $s_i$.
Note that if we define the following operators on $\mathcal{L}$, which act on vector valued functions,
\begin{equation}\label{eq:symm_opt}
    \mathbf{S} f := ( \frac{1}{n} \sum_{i=1}^n f^{(i)}, \cdots, \frac{1}{n} \sum_{i=1}^n f^{(i)}), \quad \,\,\,\,\mathbf{P} f := (T_1 f^{(1)}, \cdots, T_n f^{(n)})
    %= ( \frac{1}{|S_1|}\int f^{(1)} d m_1, \cdots,  \frac{1}{|S_m|} \int f^{(n)}d m_n)
\end{equation}
then $\mathbf{S}$ and $\mathbf{P}$ are projections on $\mathcal{L}$ (see Lemmas \ref{lem:ti} and \ref{lem:proj-space}). Then, the decompositions in \eqref{eq:1st-two-d1} can be expressed as the ranges and kernels of projections as follows:
\begin{equation}\label{eq:two-d1-oper}
  \mathcal{L} = R(\mathbf{S}) \oplus K(\mathbf{S}),  \qquad \mathcal{L} = K(\mathbf{P}) \oplus R(\mathbf{P})
\end{equation}
where $R$ and $K$ denote the range and kernel of the operators, respectively. That is, the spaces of identical interest games and zero-sum games are the range and kernel of the operator $\mathbf{S}$, while the spaces of non-strategic games and normalized games are the range and kernel of the operator $\mathbf{P}$ (by Lemma \ref{lem:non-st}). This shows how a given projection induces a decomposition of the space of games into the range and kernel of the projection.

We would like to extend decompositions in \eqref{eq:1st-two-d1} (and \eqref{eq:two-d1-oper}) to decompositions involving the subspaces of various games defined in Section \ref{subsec:semi-norm} and Table \ref{tab:notation}. Hence, our first task is to find projections onto these subspaces, and then the decomposition results are induced by these projections. To streamline the presentation, we first state our decomposition results  and then provide explicit expressions for projections onto the subspaces of various games in the subsequent propositions.
\begin{thm} \label{thm:main}We have the following decomposition results:
\begin{align*}
  \hypertarget{D1}{\textbf{D1}:}\quad  &  \mathcal{L} =(\mathcal{I}\cap\mathcal{N})\oplus(\mathcal{Z}+\mathcal{E}) \\
  \hypertarget{D2}{\textbf{D2}:}\quad   & \mathcal{L}=  (\mathcal{I}+\mathcal{E}) \oplus (\mathcal{Z} \cap \mathcal{N})  \\
  \hypertarget{D3}{\textbf{D3}:} \quad & \mathcal{L}=(\mathcal{I}\cap\mathcal{N})\oplus(\mathcal{Z}\cap\mathcal{ N})\oplus \mathcal{B}
%(i) \quad  & \mathcal{L}=(\mathcal{I}\cap\mathcal{N})\oplus(\mathcal{Z}+\mathcal{E}) \\
%(ii) \quad &  \mathcal{L} = (\mathcal{I}+\mathcal{E}) \oplus (\mathcal{Z} \cap \mathcal{N})   \\
%(iii) \quad & \mathcal{L}=(\mathcal{I}\cap\mathcal{N})\oplus(\mathcal{Z}\cap\mathcal{ N})\oplus \mathcal{B}
\end{align*}
where we recall $\mathcal{B}=(\mathcal{I}+\mathcal{E})\cap(\mathcal{Z}+\mathcal{E})$
\end{thm}
\begin{proof}
  See Appendix \ref{appen:decomp_games}.
\end{proof}

To explain the idea of decompositions in Theorem \ref{thm:main}, we start with a natural way to represent the payoff function into a sum of payoffs, which aggregates strategic interactions among players in the various subset of $N$.\footnote{For example, \citet{Sandholm10} decomposes an $n$-player finite strategy game into $2^n$ component games in which each subset of players is ``active''. \citet{Ui00} also expresses a potential game as a sum of component games in which payoffs depend only on the subsets of players} Note the following partition of identity $I$
\begin{equation}\label{eq:par-id}
  I = \prod_{l=1}^{n} (T_l + (I-T_l))= \sum_{M \subset N} \prod_{l \not \in M} T_l \prod_{k \in M} (I- T_k),
\end{equation}
which is obtained by expanding the product and using the commutative property  of $T_i$'s in \eqref{eq:def-t}.
Using \eqref{eq:par-id}, $u:S \rightarrow \mathbb{R}$ can be written as
\begin{equation} \label{eq:u_M}
    u= \sum_{M \subset N} u_M \text{ where  } u_M = \prod_{l \not \in M}T_l \prod_{k \in M}(I-T_k)u.
\end{equation}
(see also Proposition 2.7 in \citet{Sandholm10}).
Observe that $u_M$ is normalized with respect to $s_k$ for all $k \in M$ and $u_M$ does not depend on $s_l$ for all $l \not \in M$. That is, $u_M$ normalizes the payoff function $u$ with respect to the strategies of players in $M$ and renders the payoff function $u$ independent of the strategies of players outside of $M$ by integrating out those strategies. Also note that for $u:S \rightarrow \mathbb{R}$,
\begin{equation} \label{eq:lab}
    \sum_{M \ni i } u_M= (I-T_i) u, \qquad \sum_{M \not \ni i } u_M= T_i u.
\end{equation}
where $\sum_{M \ni i}$ (or $\sum_{M \not \ni i }$) is the summation which runs over all subsets of $N$ containing $i$ (or all subsets of $N$ not containing $i$, respectively).

First, consider the subspace of identical interest normalized games $\mathcal{I} \cap \mathcal{N}$ in Theorem \ref{thm:main} \textbf{D1}. From \eqref{eq:1st-two-d1} and \eqref{eq:two-d1-oper}, we have $\mathcal{I} \cap \mathcal{N} = R(\mathbf{S}) \cap R(\mathbf{P})$. If $\mathbf{S}$ and $\mathbf{P}$ were commuting, $\mathbf{SP}$ would be a projection and thus $\mathcal{I} \cap \mathcal{N}=R(\mathbf{S}) \cap R(\mathbf{P})=R(\mathbf{SP})$ would hold. However, since $\mathbf{S}$ and $\mathbf{P}$ do not commute,
%\[
%    R(\mathbf{S}) \cap R(\mathbf{P}) \neq R(\mathbf{SP})
%\]
$\mathbf{SP}$ is not a projection onto $\mathcal{I} \cap \mathcal{N}$. Fortunately, it turns out that the following projection $\mathbf{G}$ can be used to define a projection  onto the subspace of identical interest normalized games:
\begin{equation} \label{eq:orth-proj}
    \mathbf{G} f := (\prod_{l=1}^n (I - T_l) f^{(1)},\cdots, \prod_{l=1}^n (I - T_l) f^{(n)}).
\end{equation}
The projection $\mathbf{G}$ normalizes each player's payoff function with respect to \emph{all} players' strategies. Also, it is easy to check that $\mathbf{G}$ commutes with $\mathbf{S}$, hence $\mathbf{SG}$ becomes a projection on $\mathcal{L}$, and we also show that the range and kernel of $\mathbf{SG}$ are the subspaces of identical interest normalized games and zero sum-equivalent games, summarized in the following proposition. %                                                                                                           \com{
%
%From \eqref{eq:1st-two-d1} and \eqref{eq:two-d1-oper}, intuitively, }this subspace may be characterized by the range of the composite operator of two operators, $\mathbf{SP}$. . However, since $\mathbf{S}$ and $\mathbf{P}$ do not commute, the range of the composite operator cannot induce a decomposition in $\mathcal{L}$.

\begin{prop}[\textbf{Decomposition D1}] \label{prop:i-normalized}
  We have the following results. \\
 (i) $\mathbf{SG}$ is a projection. \\
(ii) ${\displaystyle   \mathcal{I} \cap \mathcal{N}= R(\mathbf{S}) \cap R(\mathbf{P}) = R(\mathbf{SG})}$ and ${\displaystyle \mathcal{Z} + \mathcal{E}= K(\mathbf{S}) + K(\mathbf{P}) = K(\mathbf{SG})}$.
%\begin{equation}\label{eq:zero-char}
% f^{(i)} =\frac{1}{n}\sum_{j=1}^{n} f^{(j)}_N = \frac{1}{n}\sum_{j=1}^{n} \prod_{l=1}^n (I-T_l) f^{(j)}
%\end{equation}
%for all $i$.
\end{prop}
\begin{proof}
  See Lemma \ref{lem:proj-space}, Proposition \ref{appen-prop:iinorm} and Proposition \ref{prop:zero-equiv}.
\end{proof}

% e have the following characterization which can be used to find the projection onto $\mathcal{I} \cap \mathcal{N}$:

%Proposition \ref{prop:i-normalized} leads us to define the following projection $\mathbf{G}$ on $\mathcal{L}$:
%
% Then, Proposition \ref{prop:i-normalized} shows that
%\[
%    R(\mathbf{SG})= \mathcal{I} \cap \mathcal{N}
%\]
%We also show that $K(\mathbf{SG})=\mathcal{Z} + \mathcal{E}$ (see ).
\noindent The characterizations in Proposition \ref{prop:i-normalized} induce Theorem \ref{thm:main} \textbf{D1}, decomposing $\mathcal{L}$ as the direct sum of the range and kernel of the projection, $\mathbf{SG}$.

Next, consider the subspace of potential games, $\mathcal{I}+ \mathcal{E}$. Again, note that $    \mathcal{I}+\mathcal{E} = R(\mathbf{S})+R(\mathbf{P})$. Thus if $\mathbf{S}$ and $\mathbf{P}$ were commuting and $\mathbf{SP=0}$ held, $\mathbf{S+P}$ would be the projection onto $\mathcal{I}+\mathcal{E}$ and  $\mathcal{I}+\mathcal{E} = R(\mathbf{S})+R(\mathbf{P}) = R(\mathbf{S}+\mathbf{P})$ would hold (see the condition  in Lemma \ref{lem:vec-proj}).
%In fact, since $\mathbf S$ and $\mathbf P$ do not commute as mentioned, we have
% \[
%    R(\mathbf{S})+R(\mathbf{P}) \neq  R(\mathbf{S}+\mathbf{P})
% \]
% and $\mathbf{S+P}$ is not a projection onto the subspace of potential games.
Instead, we managed to find new characterizations for potential games, which can be used to derive an explicit expression for the projection onto the potential game subspace (equations \eqref{eq:pot-con2}, \eqref{eq:potent-char}). In Proposition \ref{prop:pot}, we establish equivalence between these new characterizations and the existing test for potential games in \citet{HandRTest15} (equation \eqref{eq:text-pot-con}).

\begin{prop}[\textbf{Potential games}] \label{prop:pot}
The following statements are equivalent. \\
 (i) $\,\,$ $f$ is a potential game. \\
 (ii)
 \begin{equation}\label{eq:text-pot-con}
   (I-T_i)(I-T_j) f^{(i)} =(I-T_i)(I-T_j) f^{(j)} \textrm{ for all } i,j \,.
 \end{equation}
 (iii)
 \begin{equation}\label{eq:pot-con2}
  f^{(i)}_M = \frac{1}{|M|} \sum_{j \in M} f^{(j)}_M \textrm{ and for all }  i \in M, \textrm{ for all non-empty  } M \subset N \,.
\end{equation}
(iv) For all $i$,
\begin{equation}\label{eq:potent-char}
 f^{(i)} =\sum_{M \ni i} \frac{1}{M} \sum_{j \in M} f_M^{(j)} + \sum_{M \not \ni i}  f_M^{(i)} =\sum_{M \ni i} \frac{1}{M} \sum_{j \in M}  \prod_{l \not \in M} T_l \prod_{k \in M} (I- T_k) f^{(j)} + T_i f^{(i)}
\end{equation}
\end{prop}
\begin{proof}
  See Proposition \ref{appen-prop:pot-char}.
\end{proof}
\noindent  Implicit in Proposition \ref{prop:pot} (see equation \eqref{eq:potent-char}) is the following operator $\mathbf{V}$ on $\mathcal{L}$,
\begin{equation}\label{eq:v-proj}
    \mathbf{V} f := (\sum_{M \ni 1} \frac{1}{M} \sum_{j \in M} f_M^{(j)} , \cdots, \sum_{M \ni n} \frac{1}{M} \sum_{j \in M} f_M^{(j)})
\end{equation}
and it is straighforward to check that $\mathbf{V}$ is a projection (i.e., $\mathbf{V}^2 = \mathbf{V}$),  that $\mathbf{V}$ and $\mathbf{P}$ commute and that $\mathbf{V}\mathbf{P}=0$ (see Lemma \ref{lem:proj-space}). Then, Proposition \ref{prop:pot} implies that $\mathbf{V} + \mathbf{P}$ is a projection onto the subspace of potential games, and
we also show that $K(\mathbf V+\mathbf P)$ is equal to the subspace of zero-sum normalized games:
\begin{table}
  \centering
  \begin{tabular}{c|c|c}
    \hline
    % after \\: \hline or \cline{col1-col2} \cline{col3-col4} ...
    Identical interest normalized games & $\mathcal{I} \cap \mathcal{N}$ & $R(\mathbf{SG})$ \\
    Zero-sum normalized games & $\mathcal{Z} \cap \mathcal{N}$ & $K(\mathbf{V+P}$) \\
    \hline
    Potential games & $\mathcal{I}+\mathcal{E}$ & $R(\mathbf{V+P})$ \\
    Zero-sum equivalent games & $\mathcal{Z}+\mathcal{E}$ & $K(\mathbf{SG})$ \\
    Zero-sum equivalent potential games & $\mathcal{B}$ & $K(\mathbf{SG})\cap R(\mathbf{V+P})$ \\
    \hline
  \end{tabular}
  \caption{Summary of projection mappings}\label{tab:projs}
\end{table}
\begin{prop}[\textbf{Decomposition D2}] \label{prop:d2}
  We have the following results. \\
(i) $\mathbf{V}+\mathbf{P}$ is a projection. \\
(ii) ${\displaystyle  \mathcal{I} + \mathcal{E}=  R(\mathbf{S}) + R(\mathbf{P}) = R(\mathbf{V+P})}$ and  ${\displaystyle \mathcal{Z} \cap \mathcal{N}= K(\mathbf{S}) \cap K(\mathbf{P}) = K(\mathbf{V+P})}$.
%\[
%  \mathbf{I} + \mathbf{E}=  R(\mathbf{S}) + R(\mathbf{P}) = R(\mathbf{V+P}), \quad  \mathbf{Z} \cap \mathbf{N}= K(\mathbf{S}) \cap K(\mathbf{P}) = K(\mathbf{V+P})
%\]
\end{prop}
\begin{proof}
See  Lemma \ref{lem:proj-space}, Proposition \ref{appen-prop:pot-char}, and  Proposition \ref{prop:zero-norm}.
\end{proof}
\noindent Again, from Proposition \ref{prop:d2} we obtain the decomposition in Theorem \ref{thm:main}, \textbf{D2}.

%\[
%    R(\mathbf{SG}) \oplus K(\mathbf{V+P}) = R(\mathbf{SG}) \oplus R(I- \mathbf{(V+P)}) = R(\mathbf{SG} + \mathbf{I-(V+P)})
%\]

Finally, to obtain decomposition $\mathbf{D3}$, we show that $\mathbf{SG}$ and $\mathbf{I-(V+P)}$ commute and $\mathbf{SG(I-(V+P)) =0}$. This implies  that   the range of $\mathbf{SG} + \mathbf{I-(V+P)}$ is equal to the direct sum of the identical interest normalized game subspace and the zero-sum normalized game. We also show that the kernel of $\mathbf{SG} + \mathbf{I-(V+P)}$ is the subspace of zero-sum equivalent potential games:
\begin{prop}[\textbf{Decomposition D3}] \label{prop:d3}
We have the following results. \\
(i) $\mathbf{SG}$ and $\mathbf{I-(V+P)}$ commute and $\mathbf{SG (I-(V+P)) = 0}$. \\
(ii) ${\displaystyle\,\,\,\,\,\,\,\,\,\, (\mathcal{I} \cap \mathcal{N}) \oplus (\mathcal{Z} \cap \mathcal{N}) = R(\mathbf{SG}) \oplus K(\mathbf{V+P}) = R(\mathbf{SG + I - (V+P)})}$.  \\
 ${\displaystyle \,\,\,\,\quad \mathcal{B}= (\mathcal{I} + \mathcal{E}) \cap (\mathcal{Z} + \mathcal{E})
=K(\mathbf{SG}) \cap R(\mathbf{V+P}) = K(\mathbf{SG + I - (V+P)})}$.
%\[
%    R(\mathbf{SG}) \oplus K(\mathbf{V+P}) = R(\mathbf{SG + I - (V+P)}), \,\,\, K(\mathbf{SG}) \cap R(\mathbf{V+P}) = K(\mathbf{SG + I - (V+P)})
%\]
\end{prop}
\begin{proof}
    See Proposition \ref{appen:vec-decomp} and Lemma \ref{lem:proj-space}.
\end{proof}
\noindent Then, decomposition \textbf{D3} again follows from the range and kernel decomposition of projection $\mathbf{SG + I - (V+P)}$. We summarize the ranges and kernels of these operators as follows in Table \ref{tab:projs}.

\begin{table}
  \centering
  \setstretch{1.2}
  \scalefont{1}
  \begin{tabular}{c|c|c}
    \hline
    % after \\: \hline or \cline{col1-col2} \cline{col3-col4} ...
    $\frac{1}{N}\sum_{l=1}^{n}f^{(l)}_N$ & $\sum_{\substack{M \ni i \\ M \neq N}} \frac{1}{M} \sum_{j \in M} f_M^{(j)} + T_i f^{(i)}$ & $(I-T_i)f^{(i)} - \sum_{M \ni i} \frac{1}{M} \sum_{j \in M} f_M^{(j)} $ \\
    \hline
    $=f_{\mathcal{I} \cap \mathcal{N}}$ & $=f_{\mathcal{B}}$ & $=f_{\mathcal{Z} \cap \mathcal{N}}$\\
    \hline
    \multicolumn{2}{c|}{ $= f_{\mathcal{I}+\mathcal{E}}$} &  \\ % <---- inserted &
    \hline
     & \multicolumn{2}{|c} {$= f_{\mathcal{Z}+\mathcal{E}}$}   \\ % <---- inserted &
    \hline
  \end{tabular}
  \caption{\textbf{Summary of component games}. In the first row, $f^{(i)}$ is decomposed into three components,$f_{\mathcal{I} \cap \mathcal{N}}$, $f_{\mathcal{B}}$ and  $f_{\mathcal{I} \cap \mathcal{N}}$. Then $f_{\mathcal{I}+ \mathcal{E}}$ is obtained by adding the first two components and $f_{\mathcal{Z}+\mathcal{E}}$ is obtained by adding the last two components. }\label{tab:component}
\end{table}

Decompositions in Theorem \ref{thm:main} are of great importance in practice since they provide algorithms to extract the component games from a given game. In other words, the component games can be interpreted as  ``closest'' potential, zero-sum equivalent, zero-sum equivalent potential, identical interest normalized, and zero-sum normalized games to the original games.
Concretely, by applying projections, we obtain each component game as follows:
\[
    f= \underbrace{\mathbf{SG}f}_{\in \mathcal{I} \cap \mathcal{N}}  + \underbrace{(\mathbf{I-(V+P)})f}_{\in  \mathcal{Z} \cap \mathcal{N}} + \underbrace{(\mathbf{V+P - SG}) f}_{\in  \mathcal{B}}.
\]
Table \ref{tab:component} summarizes how to find each component game.

\begin{rem} \normalfont
In the special case of $2$-player games, $n=2$, we have the indentity
\[
    \mathbf V + \mathbf P = \mathbf I - (\mathbf I-\mathbf S) \mathbf G
\]
and all subspaces can be described by using the projections $\mathbf{S, G}$, yielding much simpler characterizations for \textbf{D2} and \textbf{D3}:
   \begin{align*}
       \textbf{D2}' \quad & \mathcal{L} =
     (\mathcal{I}+\mathcal{E})
     \oplus
     (\mathcal{Z} \cap \mathcal{N}) \,\,\,\,\,\,\,\,\,\,\,\,\,=  K(\mathbf{(I-S)G}) \oplus R(\mathbf{(I-S)G})\\
    \textbf{D3}' \quad &  \mathcal{L}  = (\mathcal{I}\cap\mathcal{N})
     \oplus
     (\mathcal{Z}\cap\mathcal{N}) \oplus \mathcal{B} = R(\mathbf{SG}) \oplus R(\mathbf{(I-S)G}) \oplus K(\mathbf{G}).
     \end{align*}
$\square$
\end{rem}

\begin{rem} \normalfont
If we introduce the following scalar product of payoff functions
\begin{equation}\label{eq:semi-norm}
 \left\langle f,g\right\rangle :=\sum_{i}\int f^{(i)}g^{(i)}dm,
\end{equation}
then the decompositions, $\textbf{D1}, \textbf{D2}$ and $\textbf{D3}$, become orthogonal. Orthogonal decompositions are useful in the sense that we can naturally characterize a class of games by examining their orthogonal complements. For example, the sufficiency
and necessity of the well-known Monderer and Shapley cycle condition for potential games (Theorem 2.8 in \citet{Monderer96}) can be proved by showing that this condition requires that potential games are orthogonal to all zero-sum normalized games (see Section \ref{subsec:finite} and \citet{HandR11}).
$\square$
\end{rem}

\medskip

\tocless \section{Equilibrium properties of component games \label{sec:zero-eq}}

\tocless \subsection{Zero-sum equivalent games}
In this section, we discuss the equilibrium properties of component games in Definitions \ref{def:space} and \ref{def:pot-zero}. When we study Nash equilibria of finite games, we will consider both pure and mixed strategies. To this purpose, for finite games we let $\Delta_i \,=\, \{\sigma_i \in\mathbb{R}^{|S_i|}: \sum_{s_i \in S_i} \sigma_{i}(s_i)=1,\,\sigma_{i}(s_i) \geq 0\text{\,\,\ for all } s_i \}$ with $\sigma_i(s_i)$ being the probability that player $i$ uses strategy $s_i$.  We also follow the usual convention of extending the domain of the payoff $f$ from $S$ to $\Delta = \prod_{i=1}^n
\Delta_i$ by defining
\begin{equation}
f^{(i)}(\sigma):=\sum_{s\in S} f^{(i)}(s)  \prod_{k} \sigma_k(s_k). \label{eq:game}
\end{equation}
For continuous strategy games, we consider mainly the set of Nash equilibria in pure strategies (except for Proposition \ref{prop:norm-zero-ci}).\footnote{There are existing results for the sufficient conditions ensuring the existence of a pure strategy Nash equilibrium (\citet{Debreu52}, \citet{Glicksberg52}, \citet{Fan52}, \citet{Dasgupta86}, \citet{Reny03New} and  \citet{Duggan07}). Rather than imposing specific conditions, we simply require that a continuous game possess a pure strategy Nash equilibrium if necessary.} Hereafter, for a continuous game we denote by $s$  a pure strategy profile, while for a finite game we denote by $s=\sigma$ a mixed strategy profile by abuse of notation.

To study the equilibrium properties of zero-sum equivalent games, for a given zero-sum equivalent game $f = w+h$ where $w \in \mathcal{Z}$ and $h \in \mathcal{E}$, we introduce
\begin{equation}\label{eq:phi}
  \Phi_f(s):= \max_{t \in S} \sum_{i=1}^{n} w^{(i)}(t_{i}, s_{-i}) = \sum_{i=1}^{n}\max_{t_i \in S_i} w^{(i)}(t_{i}, s_{-i}).
\end{equation}
The function in \eqref{eq:phi} has been used by various authors to examine the existence of a Nash equilibrium for a game.\footnote{See \citet{Nikaido55}, \citet{Rosen65}, \citet{Bregman88}, \citet{Myerson97}, \citet{Barron08}, \citet{Cai15}} We will show that $\Phi_f$ provides some useful characterizations for the class of zero-sum equivalent games. Note that
\begin{equation} \label{eq:char}
    \Phi_f(s^*) = \min_s \Phi_f(s)=0 \iff  s^* \text{  is a Nash equilibrium for }f=w+h.
\end{equation}

Using \eqref{eq:char}, we will study the conditions under which a zero-sum equivalent game admits a unique Nash equilibrium or a convex set of Nash equilibria. To do this, we will show that the (strict) convexity of $w^{(i)}(s_i, s_{-i})$ in $s_{-i}$ for all $i$ implies the (strict) convexity of $\Phi_f(s)$ in \eqref{eq:phi}. This follows from the simple fact that the value function of a maximization problem (in \eqref{eq:phi}) is convex in a parameter if the objective function itself is convex in that parameter. The same principle yields the convexity of the profit function in the basic microeconomics context since the objective profit function is convex in prices. Then, since the set of optimizers of a (strictly) convex function is convex (singleton), the relationship in \eqref{eq:char} shows that a sufficient condition for the convexity or uniqueness of Nash equilibria is the convexity or strict convexity of $w^{(i)}(s_i, s_{-i})$ in $s_{-i}$ for all $i$.

\begin{prop}[Nash equilibria for zero-sum equivalent games]\label{prop:zero-convex}
Suppose that $f$ is a zero-sum equivalent game, where $f=w+h$; $w$ is a zero-sum game and $h$ is a non-strategic game. Suppose that $f$ has a Nash equilibrium. \\
(i) If $w^{(i)}(s_{i},s_{-i})$ is convex in $s_{-i}$
for all $s_{i}$ for all $i$, the set of Nash equilibria is convex.\\
(ii) If $w^{(i)}(s_{i},s_{-i})$ is strictly convex
in $s_{-i}$ for all $s_{i}$ for all $i$, there exists a unique Nash equilibrium
for $f$.\end{prop}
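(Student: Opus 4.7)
\medskip
\noindent\textit{Proof plan.} Since $f = w + h$ with $h$ non-strategic, the games $f$ and $w$ share the same Nash equilibria, and the formula for $\Phi_f$ simplifies: in the difference $f^{(i)}(t_i,s_{-i}) - f^{(i)}(s)$ the contributions of $h^{(i)}$ cancel (because $h^{(i)}$ depends only on $s_{-i}$), and $\sum_i w^{(i)}(s) = 0$ by the zero-sum property, so
\[
\Phi_f(s) \;=\; \max_{t\in S} \sum_{i=1}^n w^{(i)}(t_i, s_{-i}).
\]
By Lemma \ref{prop:NE-for-zero-E}, the Nash equilibria of $f$ are precisely the points at which $\Phi_f$ attains the value $0$, which coincides with its global infimum. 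The strategy is therefore (a) to establish existence of at least one Nash equilibrium, and (b) to translate the convexity hypotheses on the $w^{(i)}(s_i,\cdot)$ into (strict) convexity of $\Phi_f$, then invoke Lemma \ref{lem:phi-conv}.

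\medskip
For part (i), existence is exactly the observation made immediately before the proposition: convexity of each $w^{(i)}$ in $s_{-i}$ combined with the zero-sum identity $w^{(i)} = -\sum_{l\ne i} w^{(l)}$ forces each $w^{(i)}$, hence each $f^{(i)}$, to be concave in $s_i$; Condition \hyperlink{con-R}{(\textbf{R})} plus concavity then yields a Nash equilibrium by \citet{Rosen65}. For convexity of $\Phi_f$, fix $t \in S$. The map $s \mapsto w^{(i)}(t_i, s_{-i})$ is convex in $s_{-i}$ by hypothesis and constant in $s_i$, hence convex in $s$; summing over $i$ preserves convexity; and taking the pointwise supremum over $t$ preserves convexity. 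By Lemma \ref{lem:phi-conv}(i) the set of minimizers of $\Phi_f$---equivalently, the set of Nash equilibria---is convex.

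\medskip
For part (ii), the same path delivers strict convexity of $\Phi_f$, after which Lemma \ref{lem:phi-conv}(ii) forces uniqueness of the minimizer; existence is supplied by part (i), whose hypothesis is weaker. The step needing care is the strict convexity of the sum $\sum_i w^{(i)}(t_i, s_{-i})$ in $s$. Given distinct $s, s' \in S$ there is some coordinate $j$ with $s_j \ne s'_j$; for every $i \ne j$ the segment between $s_{-i}$ and $s'_{-i}$ is nondegenerate (it moves in the $j$-th component), so strict convexity of $w^{(i)}(t_i,\cdot)$ in $s_{-i}$ yields a strict inequality for that summand. Since there is at least one such $i$ whenever $n \ge 2$, the sum is strictly convex on $S$ for each fixed $t$. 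Strict convexity is transferred through the pointwise maximum by evaluating at a maximizer $t^\star$ of the right-hand side at $\lambda s + (1-\lambda)s'$:
\[
\Phi_f(\lambda s + (1-\lambda)s') = \sum_i w^{(i)}\bigl(t^\star_i,(\lambda s+(1-\lambda)s')_{-i}\bigr) < \lambda \sum_i w^{(i)}(t^\star_i,s_{-i}) + (1-\lambda)\sum_i w^{(i)}(t^\star_i,s'_{-i}) \le \lambda \Phi_f(s) + (1-\lambda)\Phi_f(s').
\]

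\medskip
The main obstacle is this strict-convexity step: for any fixed $i$ the $i$-th summand is only convex, not strictly convex, in $s$ because it ignores $s_i$ entirely. The resolution is to locate, for each pair $s \ne s'$, a ``witness'' coordinate $j$ and to exploit the fact that all $n-1$ other summands then inherit the strict inequality; this is what closes the gap between pointwise convexity of the individual pieces and joint strict convexity of the whole.
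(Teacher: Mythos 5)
Your proof is correct and follows essentially the same route as the paper's: existence via concavity of $w^{(i)}$ in $s_i$ (Rosen), then (strict) convexity of $\Phi_f$ established by evaluating the inner sum at the maximizer for the convex-combination point and comparing with the endpoints, followed by Lemma \ref{lem:phi-conv}. Your ``witness coordinate'' argument for why the sum $\sum_i w^{(i)}(t_i,\cdot)$ is \emph{strictly} convex even though each individual summand ignores $s_i$ is actually spelled out more carefully than in the paper, which asserts the strict inequality for the sum without comment.
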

\begin{proof}
See Appendix \ref{proof:zero-convex}.
\end{proof}

\begin{table}
\centering\scalefont{0.8}

\begin{tabular}{c|c|c}
\hline
 & Properties & Examples\tabularnewline
\hline
Zero-sum equivalent & Convexity/ uniqueness of NE  & contest games \tabularnewline
 games & under some conditions & quasi-Cournot games \tabularnewline
\hline
Zero-sum equivalent & Two-player games: dominant strategy NE & Prisoner's Dilemma \tabularnewline

potential games &  & quasi-Cournot games\tabularnewline

\hline
Zero-sum normalized  & Unique uniform mixed  & Rock-Paper-Scissors games \tabularnewline
games                 & strategy NE &
Matching Pennies games \tabularnewline
\hline

Identical interest  & Uniform mixed strategy NE &
Coordination games
\tabularnewline
normalized  games & &  \tabularnewline

\hline
\end{tabular}
\caption{\textbf{Summary of equilibrium characterizations for game.} In the table, (C) and (F) mean continuous strategy games and finite games, respectively.}
\label{tab:eq-char}
\end{table}

We next further explore the consequences of Proposition \ref{prop:zero-convex} for two-player finite games.
Though a class of two-player finite games, often called bi-matrix games, is one of the simplest classes, in general, it is acknowledged that even bi-matrix games are hard to solve \citep{Savani06}. We also focus on a class of non-degenerate games. There are several notions of non-degeneracy for finite games, depending on contexts and problems---such as equilibrium characterizations and
classifications  of dynamics.\footnote{\citet{WuandJiang52} define an essential game---a game whose Nash equilibria  all change only slightly against a smaller perturbation to the game and show that almost all finite games are essential; i.e., the set of all essential games is an open and dense subset of the space of
games.  \citet{Wilson71} introduces a non-degeneracy assumption regarding payoff matrices (more
precisely tensors) and shows that almost all games have an odd (hence finite) number of Nash equilibria. In the
context of evolutionary game theory, \citet{Zeeman80} also defines a stable game whose dynamic remains
structurally unchanged against a small perturbation. } Since we wish to study the equilibrium properties of two-player finite games, we adopt the following non-degeneracy assumption introduced by \citet{Quint&Shubik97} (see also \citet{Shapley74}). Let $|supp(\sigma_i)|$ be the size of the support of a mixed strategy $\sigma_i=s_i$ for finite games.

\begin{con*}[\hypertarget{con-N}{\textbf{N}}] Suppose that $f$ is a two-player finite game. If $|supp(\sigma_1)| =k$, then there are no more than $k$ pure strategy best responses for player 2 against $\sigma_1$. Similarly, if $|supp(\sigma_2)|=k$, there are no more than $k$ pure strategy best responses for player 1 against $\sigma_2$.
\end{con*}
\noindent Then Lemma 2.2 in \citet{Quint&Shubik97} shows that a two-player finite game has a finite number of Nash equilibria under Condition \hyperlink{con-N}{\textbf{(N)}}. A straightforward consequence  of Proposition \ref{prop:zero-convex} is that, generically,  two-player finite zero-sum equivalent games have a \emph{unique} Nash equilibrium.

\begin{cor}[two-player finite zero-sum equivalent games]
\label{cor:2p-zero-finite}Suppose that $f$ is a two-player finite zero-sum
equivalent game. Then the set of Nash equilibria for $f$
is convex. If $f$ satisfies Condition \hyperlink{con-N}{\textbf{(N)}},
the Nash equilibrium is unique. \end{cor}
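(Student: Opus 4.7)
The plan is to apply Proposition \ref{prop:zero-convex}(i) to the mixed-strategy extension of $f$ on $\Delta=\Delta_1\times\Delta_2$, and then use finiteness of the equilibrium set to force uniqueness. Writing $f=w+h$ with $w\in\mathcal{Z}$ and $h\in\mathcal{E}$, I would first pass from $S$ to $\Delta$ via the multilinear extension in \eqref{eq:game}. Because each $\Delta_i$ is a compact convex polytope in a finite-dimensional Euclidean space and the extended payoffs $f^{(i)}(\sigma_1,\sigma_2)$ are multilinear (hence continuous), Condition \hyperlink{con-R}{\textbf{(R)}} is met in the mixed formulation. Multilinear extension preserves both the zero-sum and the non-strategic components, so the mixed extension of $f$ is again zero-sum equivalent on $\Delta$, with zero-sum representative equal to the multilinear extension of $w$.

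Next I would observe the crucial bilinearity. For a two-player finite zero-sum game, $w^{(1)}(\sigma_1,\sigma_2)=\sigma_1^{\top} A\sigma_2$ and $w^{(2)}(\sigma_1,\sigma_2)=-\sigma_1^{\top} A\sigma_2$ for some matrix $A$. In particular, for each fixed $\sigma_i$, the function $\sigma_{-i}\mapsto w^{(i)}(\sigma_i,\sigma_{-i})$ is affine, hence convex. This is exactly the hypothesis of Proposition \ref{prop:zero-convex}(i) applied on the convex compact set $\Delta$. Invoking that proposition yields existence of a Nash equilibrium and convexity of the equilibrium set of the mixed extension, which by definition is the set of (mixed) Nash equilibria of $f$.

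For the uniqueness claim, I would combine convexity with Condition \hyperlink{con-N}{\textbf{(N)}}: any nonempty convex subset of a Euclidean space with finite cardinality must be a singleton, since otherwise it would contain a nondegenerate segment and hence infinitely many points. Thus the Nash equilibrium is unique.

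There is no serious obstacle here: the entire argument reduces to recognizing that two-player zero-sum payoffs are \emph{automatically} affine in the opponent's mixed strategy, so the convexity hypothesis in Proposition \ref{prop:zero-convex}(i) is free, and that finite $+$ convex forces a singleton. The only small technical point is the bookkeeping of the mixed extension, namely that the $\mathcal{Z}+\mathcal{E}$ decomposition on $S$ lifts to a $\mathcal{Z}+\mathcal{E}$ decomposition on $\Delta$, which is immediate from linearity of the extension operator \eqref{eq:game}.
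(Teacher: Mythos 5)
Your proposal is correct and follows essentially the same route as the paper: decompose $f=w+h$, note that the bilinear mixed extension of the zero-sum part is affine (hence convex) in the opponent's mixed strategy so that Proposition \ref{prop:zero-convex}(i) applies on $\Delta$, and then observe that a convex set with finitely many points is a singleton. The paper's proof is just a terser version of the same argument, omitting the explicit bookkeeping that the $\mathcal{Z}+\mathcal{E}$ decomposition lifts to the mixed extension.
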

\begin{proof}
See Appendix \ref{appen:other-proofs}.
\end{proof}

\tocless \subsection{Zero-sum equivalent potential games and normalized games} \label{subsec:con-zero-equi}

We denote by $\zeta_{l} : S \rightarrow \mathbb R$  a function that does not depend on $s_{l}$,  that is, $\zeta_l :=T_l g$ for some $g$ (recall $T_l$ is defined in \eqref{eq:def-t}). In Section \ref{sec:main-thm}, we show that the quasi-Cournot model is a potential game which is also  strategically equivalent to a zero-sum game (see equations \eqref{eq:cournot-eq1}, \eqref{eq:cournot-eq2} and \eqref{eq:cournot-eq3}). Note that the payoff function in \eqref{eq:cournot-1} can be written as
\footnote{Indeed, we can choose
$\zeta_1 =  -\beta s_2 s_3 + (\alpha - \beta s_3) s_3 - c_3(s_3)$, $\zeta_2=-\beta s_1 s_3 + (\alpha - \beta s_1) s_1 - c_1(s_1)$, $\zeta_3 = -\beta s_1 s_2 + (\alpha - \beta s_2) s_2 - c_2(s_2)$.

}
\begin{equation}\label{eq:rep}
  (f^{(1)}, f^{(2)}, f^{(3)} ) \sim (\zeta_2+ \zeta_3, \zeta_1 + \zeta_3, \zeta_1 + \zeta_2 )
\end{equation}
From equation \eqref{eq:rep}, we find an identical interest game which is strategically equivalent to $f$ as follows:
\[
    (f^{(1)}, f^{(2)}, f^{(3)} ) \sim (\zeta_1+\zeta_2+ \zeta_3, \zeta_1 + \zeta_2 +\zeta_3, \zeta_1 + \zeta_2+\zeta_3 )
\]
since $\zeta_l$ does not depend on $s_l$. Similarly, we can find a zero-sum game which is strategically equivalent to $f$:
\begin{align*}
  (f^{(1)}, f^{(2)}, f^{(3)} ) & \sim (\zeta_2 + \zeta_3 - 2 \zeta_1, \zeta_1 + \zeta_3 - 2 \zeta_2, \zeta_1 + \zeta_2 - 2 \zeta_3 ) \\
   & =(-\zeta_1 + \zeta_2, \zeta_1-\zeta_2, 0) + (-\zeta_1 + \zeta_3, 0, \zeta_1 - \zeta_3)+(0, -\zeta_2 + \zeta_3, \zeta_2 - \zeta_3).
\end{align*}
The following statement makes these observations more general and precise.

\begin{prop}[\textbf{$n$-player zero-sum potential equivalent games}]
\label{prop:n-p-pop} An $n$-player game with is a zero-sum equivalent potential game if and only if
\begin{align}
(f^{(1)},f^{(2)},\cdots,f^{(n)}) & \sim\sum_{l=1}^{n}(\zeta_{l},\zeta_{l},\cdots,\zeta_{l}) \label{eq:pot-zero1} \\
 & \sim\sum_{i<j}(0,\cdots,0,\underbrace{-\zeta_{i}+\zeta_{j}}_{i-\text{th}},0,\cdots0,\underbrace{\zeta_{i}-\zeta_{j}}_{j-\text{th}},0,\cdots,0), \label{eq:pot-zero2}
\end{align}
where $\zeta_{l}(\cdot)$ does not depend on $s_{l}$.
\end{prop}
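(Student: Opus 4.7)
The plan is to leverage Proposition \ref{prop:char-b}, which identifies $\mathcal{B}=\mathcal{D}+\mathcal{E}$, and then exhibit for any game in $\mathcal{D}$ explicit non-strategic perturbations that witness the two displayed equivalences. The direction ($\Leftarrow$) is nearly immediate: if the chain of equivalences holds for some functions $\zeta_{l}$ (with $\zeta_{l}$ independent of $s_{l}$), then $f$ is strategically equivalent to the common interest game $\sum_{l}(\zeta_{l},\ldots,\zeta_{l})\in\mathcal{C}$, hence $f\in\mathcal{C}+\mathcal{E}$; and the second form is a zero-sum game since each summand has exactly two nonzero components whose sum is identically zero, so $f\in\mathcal{Z}+\mathcal{E}$. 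Taking the intersection gives $f\in\mathcal{B}$.

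For ($\Rightarrow$), suppose $f\in\mathcal{B}$. Proposition \ref{prop:char-b} produces functions $\zeta_{l}$, each independent of $s_{l}$, such that $f\sim g$ with $g^{(i)}(s)=\sum_{l\neq i}\zeta_{l}(s_{-l})$. To reach the common interest form \eqref{eq:pot-zero1}, I would add to player $i$'s payoff the term $\zeta_{i}(s_{-i})$; since this is independent of $s_{i}$, the vector $h=(\zeta_{1}(s_{-1}),\ldots,\zeta_{n}(s_{-n}))$ lies in $\mathcal{E}$, and $g^{(i)}+h^{(i)}=\sum_{l=1}^{n}\zeta_{l}(s_{-l})$ is the same function for every $i$. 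Hence $g\sim\sum_{l}(\zeta_{l},\ldots,\zeta_{l})$, and by transitivity so is $f$.

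For the zero-sum form \eqref{eq:pot-zero2}, I would compute the $k$-th component of the sum over pairs: only the pairs $(i,j)$ with $i<j$ and $\{i,j\}\ni k$ contribute, yielding
\begin{equation*}
\sum_{j>k}(-\zeta_{k}+\zeta_{j})+\sum_{i<k}(\zeta_{i}-\zeta_{k})=\sum_{l\neq k}\zeta_{l}-(n-1)\zeta_{k}.
\end{equation*}
Subtracting $g^{(k)}=\sum_{l\neq k}\zeta_{l}$ leaves $-(n-1)\zeta_{k}(s_{-k})$, which is independent of $s_{k}$; the vector of these differences therefore lies in $\mathcal{E}$, so $g$ and hence $f$ is strategically equivalent to the zero-sum form. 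A quick check that this form really is zero-sum, $\sum_{k}\bigl[\sum_{l\neq k}\zeta_{l}-(n-1)\zeta_{k}\bigr]=(n-1)\sum_{l}\zeta_{l}-(n-1)\sum_{l}\zeta_{l}=0$, completes the argument.

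The only step with any risk of error is the combinatorial bookkeeping that produces $\sum_{l\neq k}\zeta_{l}-(n-1)\zeta_{k}$ as the $k$-th component of $\sum_{i<j}(\ldots)$; once that identity is in hand, both equivalences reduce to the observation that each residual is a non-strategic game, which is immediate from the fact that $\zeta_{l}$ does not depend on $s_{l}$.
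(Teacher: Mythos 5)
Your proposal is correct and follows essentially the same route as the paper's own proof: both directions rest on Proposition \ref{prop:char-b} ($\mathcal{B}=\mathcal{D}+\mathcal{E}$), the common interest form is reached by adding the non-strategic perturbation $(\zeta_{1}(s_{-1}),\ldots,\zeta_{n}(s_{-n}))$, and the zero-sum form by subtracting $(n-1)\zeta_{i}(s_{-i})$ from player $i$'s payoff and regrouping into pairwise zero-sum summands. Your combinatorial identity for the $k$-th component, $\sum_{l\neq k}\zeta_{l}-(n-1)\zeta_{k}$, is exactly the intermediate expression $\sum_{l\neq k}(\zeta_{l}-\zeta_{k})$ that the paper rearranges, so the two arguments coincide.
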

\begin{proof}
  See Appendix \ref{proof:prop:n-p-pop}.
\end{proof}
\noindent A similar expression to \eqref{eq:pot-zero1} for potential functions  is in \citet{Ui00} (see the potential function in Theorem 3 in the cited paper; Appendix D).
The immediate consequences of Proposition \ref{prop:n-p-pop} for two-player games are as follows.

\begin{cor}[\textbf{Two-player zero-sum equivalent potential games}]
\label{prop:2p-both}  We have the following results: \\
(i) Consider a two-player zero-sum equivalent potential game with
\[
    f=(f^{(1)},f^{(2)}) \sim \sum_{l=1}^2 (\zeta_l, \zeta_l).
\] If $ (s^*_1, s^*_2) \in (\arg \max_{s_1}\zeta_2(s_1), \arg \max_{s_2}\zeta_1(s_2))$ exists, then $(s^*_1, s^*_2)$ is a Nash equilibrium.  \\
(ii)  Suppose that a two-player finite zero-sum equivalent potential game satisfies Condition \hyperlink{con-N}{\textbf{(N)}}. Then the game has a strictly dominant
strategy Nash equilibrium.
\end{cor}
\begin{proof}
    See Appendix \ref{proof:prop:n-p-pop}.
\end{proof}

\noindent Intuitively, when two players have both identical and conflicting interests, the strategic interdependence effects completely offset each other as in the Prisoner's Dilemma game.
Interestingly, \citet{Alger13}, in their study on preference evolution,
identify a set of games in which ``the right thing to do'', is simply to choose a strategy that maximizes one's own payoff (p.2281 in \citet{Alger13}), and games in this set are strategically equivalent to those games in Corollary  \ref{prop:2p-both} (i). In this way, our decompositions
can also be used to provide helpful characterizations for a class of interesting
games in applications.

Finally, we show that every zero-sum normalized game and identical interest normalized game possess a uniform mixed strategy Nash equilibrium. When an identical interest game is normalized, this game is normalized with respect to all players' strategies and player $i$'s interim payoff becomes zero against all other players' uniform mixed strategies. Similarly, when a zero-sum game is normalized, one player's payoff can be expressed as the sum of all the other player's payoffs, each normalized with respect to her own strategy, and this again causes player $i$'s interim payoff to be zero.
\begin{prop}[\textbf{Zero-sum normalized games and identical interest normalized games}]
\label{prop:norm-zero-ci}Suppose that a game is a zero-sum normalized game or an identical interest normalized game. Then the uniform mixed strategy profile is always a Nash equilibrium.\end{prop}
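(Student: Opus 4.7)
The plan is to show that when all opponents of player $i$ play the uniform mixed strategy $\sigma_{-i}$ (with density $1/m_j(S_j)$ on $S_j$ for each $j\neq i$), player $i$'s expected payoff is identically zero regardless of her own strategy choice; consequently every strategy, including $\sigma_i$ itself, is a best response, and $(\sigma_1,\dots,\sigma_n)$ is a Nash equilibrium.

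First I would fix a player $i$ and a pure strategy $s_i\in S_i$, and compute
\[
U^{(i)}(s_i,\sigma_{-i}) \;=\; \int f^{(i)}(s_i,t_{-i})\prod_{j\neq i}\frac{1}{m_j(S_j)}\,dm_j(t_j).
\]
For the normalized common interest case, $f^{(i)}=v$ for a single $v$ satisfying $\int v(t_j,s_{-j})\,dm_j(t_j)=0$ for every $j$ and every $s_{-j}$ (this is the normalization condition (\ref{eq:normalize}) applied to any player, and it holds for every coordinate because all payoffs coincide with $v$). Pick any single $j\neq i$; by Fubini's theorem, integrate first with respect to $t_j$, which yields $0$ by the normalization applied in coordinate $j$ with the remaining variables frozen. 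Hence $U^{(i)}(s_i,\sigma_{-i})=0$ for every $s_i$.

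For the normalized zero-sum case, I would replace $f^{(i)}$ by $-\sum_{l\neq i} f^{(l)}$, giving
\[
U^{(i)}(s_i,\sigma_{-i}) \;=\; -\sum_{l\neq i}\int f^{(l)}(s_i,t_{-i})\prod_{j\neq i}\frac{1}{m_j(S_j)}\,dm_j(t_j).
\]
For each term indexed by $l\neq i$, the variable $t_l$ is among the integration variables (since $l\in\{-i\}$); applying Fubini's theorem and then the normalization condition for player $l$, $\int f^{(l)}(t_l,\cdot)\,dm_l(t_l)=0$, the $l$-th term vanishes. Summing, $U^{(i)}(s_i,\sigma_{-i})=0$ for every $s_i$.

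In either case, the map $s_i\mapsto U^{(i)}(s_i,\sigma_{-i})$ is the zero function, so every pure (and hence every mixed) strategy for player $i$ is a best response to $\sigma_{-i}$; in particular $\sigma_i$ is a best response. Since $i$ was arbitrary, $(\sigma_1,\dots,\sigma_n)$ is a Nash equilibrium. There is no real obstacle here beyond keeping the book-keeping of indices straight: the whole argument is just a Fubini exchange that routes the one-variable normalization integral into the middle of the product measure. No closedness or dimensionality issue arises, so the argument applies uniformly to finite and continuous strategy games.
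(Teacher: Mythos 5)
Your argument is correct and follows essentially the same route as the paper's own proof: fix $i$ and $s_i$, show $f^{(i)}(s_i,\sigma_{-i})=0$ by a Fubini exchange that isolates the normalization integral (over a single coordinate $j\neq i$ in the common interest case, over each coordinate $l$ in the zero-sum case after writing $f^{(i)}=-\sum_{l\neq i}f^{(l)}$), and conclude that every strategy is a best response. No discrepancy worth noting.
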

\begin{proof}
See Appendix \ref{proof:norm-zero-ci}.
\end{proof}

\tocless \section{Applications  \label{sec:app}}

\tocless \subsection{Two-player finite strategy games \label{subsec:finite}}
In this section, we present applications of the decomposition results in Section  \ref{sec:main-thm} and equilibrium characterizations in Section \ref{sec:zero-eq}. We first show that a two-player finite game can be uniquely decomposed into component games with distinctive equilibrium properties. Then we show that the total number of Nash equilibria for a given game, depending on some conditions in terms of its decomposition component games, can be maximal or minimal.\footnote{For the maximum number of Nash equilibria of finite games, see \citet{Quint&Shubik97, Quint&Shubik02}, \citet{Savani06}, \citet{MacLennan&Park99}.}

 The following statement shows that a given game can be decomposed into three components: the first one with pure strategy Nash equilibria, the second one with a unique uniform mixed Nash equilibrium and the third one with a dominant Nash equilibrium (see Figure \ref{fig:dep-eq} for an illustration of Theorem \ref{thm:equi-decomp}).

\begin{thm}[Two-player finite strategy games; Nash equilibria]\label{thm:equi-decomp}
\begin{figure}[t]
    \centering
    \includegraphics[scale=0.52]{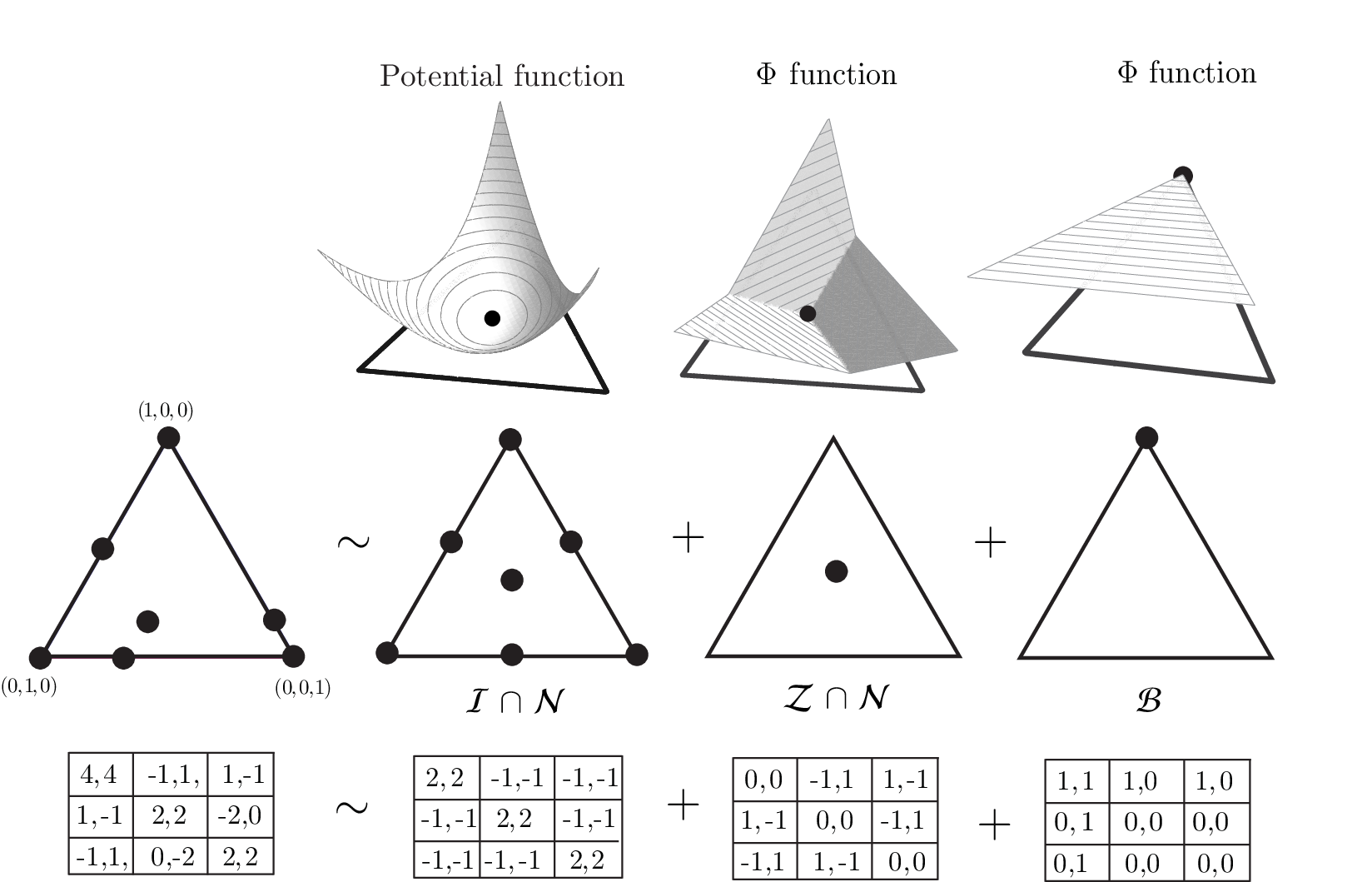}
    \caption{\textbf{Decomposition of a game into components with distinctive Nash equilibria.} In the bottom line we show the two-player game in Table \ref{tab:1}. Since all these games are symmetric, we find all symmetric Nash equilibria for the original games (three pure strategy Nash equilibria, $(1,0,0),(0,1,0), (0,0,1)$, three mixed strategy Nash equilibria involving two strategies $(1/2, 1/2, 0)$, $(1/6, 0, 5/6)$, $(0, 2/3, 1/3)$; and a completely mixed strategy Nash equilibrium $(1/6, 1/2, 1/3)$) and also find all symmetric Nash equilibria for all the other component games. We show these Nash equilibria using the simplex  in the middle line. The top line shows the potential function and the function $\Phi$. The potential function for the zero-sum equivalent potential game in $\mathcal{B}$ is given by $p_1$, while function $\Phi$ is given by $1-p_1$, where $p=(p_1,p_2, p_3) \in \Delta$.}
    \label{fig:dep-eq}
\end{figure}
Suppose that $f$ is a two-player finite strategy game. Then, $f$
can be uniquely decomposed into three components:
\[
f= f_{\mathcal{I}\cap \mathcal{N}}+f_{\mathcal{Z} \cap \mathcal{N}}+f_{\mathcal{B}}
\]
where $f_{\mathcal{I}\cap \mathcal{N}}$ is an identical interest normalized game, $f_{\mathcal{Z} \cap \mathcal{N}}$ is a zero-sum normalized game and $f_{\mathcal{B}}$ is a zero-sum equivalent potential game.
Suppose that all three component games satisfy Condition \hyperlink{con-N}{\textbf{(N)}}. Then, $f_{\mathcal{I}\cap \mathcal{N}}$ has a finite number of Nash equilibria with a uniform mixed strategy, $f_{\mathcal{Z} \cap \mathcal{N}}$ has a unique uniform mixed strategy Nash equilibrium and $f_{\mathcal{B}}$ a the strictly dominant strategy Nash equilibrium.\end{thm}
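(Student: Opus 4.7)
The plan is to obtain the decomposition directly from Theorem \ref{thm:main}(ii) and then read off the equilibrium properties of each piece from the results already established in Sections \ref{sec:zero-eq} and \ref{sec:zero-pot}.

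First I would apply Theorem \ref{thm:main}(ii), which states
\[
	\mathcal{L}=(\mathcal{N}\cap\mathcal{C})\oplus(\mathcal{N}\cap\mathcal{Z})\oplus\mathcal{B},
\]
so every $f\in\mathcal{L}$ admits a unique orthogonal decomposition $f=f_{NC}+f_{NZ}+f_{B}$ with $f_{NC}\in\mathcal{N}\cap\mathcal{C}$, $f_{NZ}\in\mathcal{N}\cap\mathcal{Z}$, and $f_{B}\in\mathcal{B}$. By Proposition \ref{prop:char-b} we have $\mathcal{B}=\mathcal{D}+\mathcal{E}$, so $f_{B}=f_{D}+e$ for some $f_{D}\in\mathcal{D}$ and $e\in\mathcal{E}$; hence $f\sim f_{NC}+f_{NZ}+f_{D}$, which gives the claimed three-component representation (where $f_D$ is in particular a zero-sum equivalent potential game). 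Uniqueness up to $\sim$ follows because $(\mathcal{N}\cap\mathcal{C})$, $(\mathcal{N}\cap\mathcal{Z})$, and $\mathcal{B}$ are linearly independent as subspaces in the orthogonal decomposition.

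Next I would verify the equilibrium properties one component at a time, assuming Condition \hyperlink{con-N}{\textbf{(N)}} holds for each. For $f_{NZ}\in\mathcal{N}\cap\mathcal{Z}\subset\mathcal{Z}+\mathcal{E}$, Proposition \ref{prop:norm-zero-ci} guarantees that the uniform mixed strategy profile is a Nash equilibrium, and Corollary \ref{cor:2p-zero-finite} (applied to the two-player finite zero-sum equivalent game $f_{NZ}$) combined with Condition \hyperlink{con-N}{\textbf{(N)}} forces this equilibrium to be unique; thus $f_{NZ}$ has a unique Nash equilibrium, which is the uniform mixed strategy. For $f_{NC}\in\mathcal{N}\cap\mathcal{C}$, Proposition \ref{prop:norm-zero-ci} again yields that the uniform mixed strategy is a Nash equilibrium, while Condition \hyperlink{con-N}{\textbf{(N)}} directly gives that the total set of Nash equilibria is finite. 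For $f_{D}\in\mathcal{B}$ (zero-sum equivalent potential game), Corollary \ref{prop:d-strict} directly provides the strictly dominant strategy Nash equilibrium under Condition \hyperlink{con-N}{\textbf{(N)}}.

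Since all the substantive work has been done in the earlier propositions, there is no real obstacle; the proof is essentially a bookkeeping exercise combining Theorem \ref{thm:main}(ii), Proposition \ref{prop:char-b}, Proposition \ref{prop:norm-zero-ci}, Corollary \ref{cor:2p-zero-finite}, and Corollary \ref{prop:d-strict}. The only point requiring a word of care is the passage from $f_{B}\in\mathcal{B}$ to $f_{D}\in\mathcal{D}$ via strategic equivalence, which is legitimate because $\sim$ preserves Nash equilibria, so the dominant-strategy property obtained for $f_{D}$ transfers to the $\mathcal{B}$-component in the orthogonal decomposition.
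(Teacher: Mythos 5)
Your proposal is correct and follows essentially the same route as the paper, which proves the theorem by combining Theorem \ref{thm:main}(ii), Proposition \ref{prop:norm-zero-ci}, and Corollary \ref{prop:d-strict}. Your additional explicit appeals to Proposition \ref{prop:char-b} (to pass from the $\mathcal{B}$-component to a representative in $\mathcal{D}$ up to strategic equivalence) and to Corollary \ref{cor:2p-zero-finite} (to get uniqueness of the uniform mixed Nash equilibrium for the normalized zero-sum component) simply make explicit steps the paper leaves to the surrounding discussion.
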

\begin{proof}
This follows from the decomposition theorem, Theorem \ref{thm:main}, Corollary \ref{cor:2p-zero-finite}, Corollary \ref{prop:2p-both} and Proposition \ref{prop:norm-zero-ci}.
\end{proof}

\begin{figure}[t]
  \centering
  \includegraphics[scale=0.32]{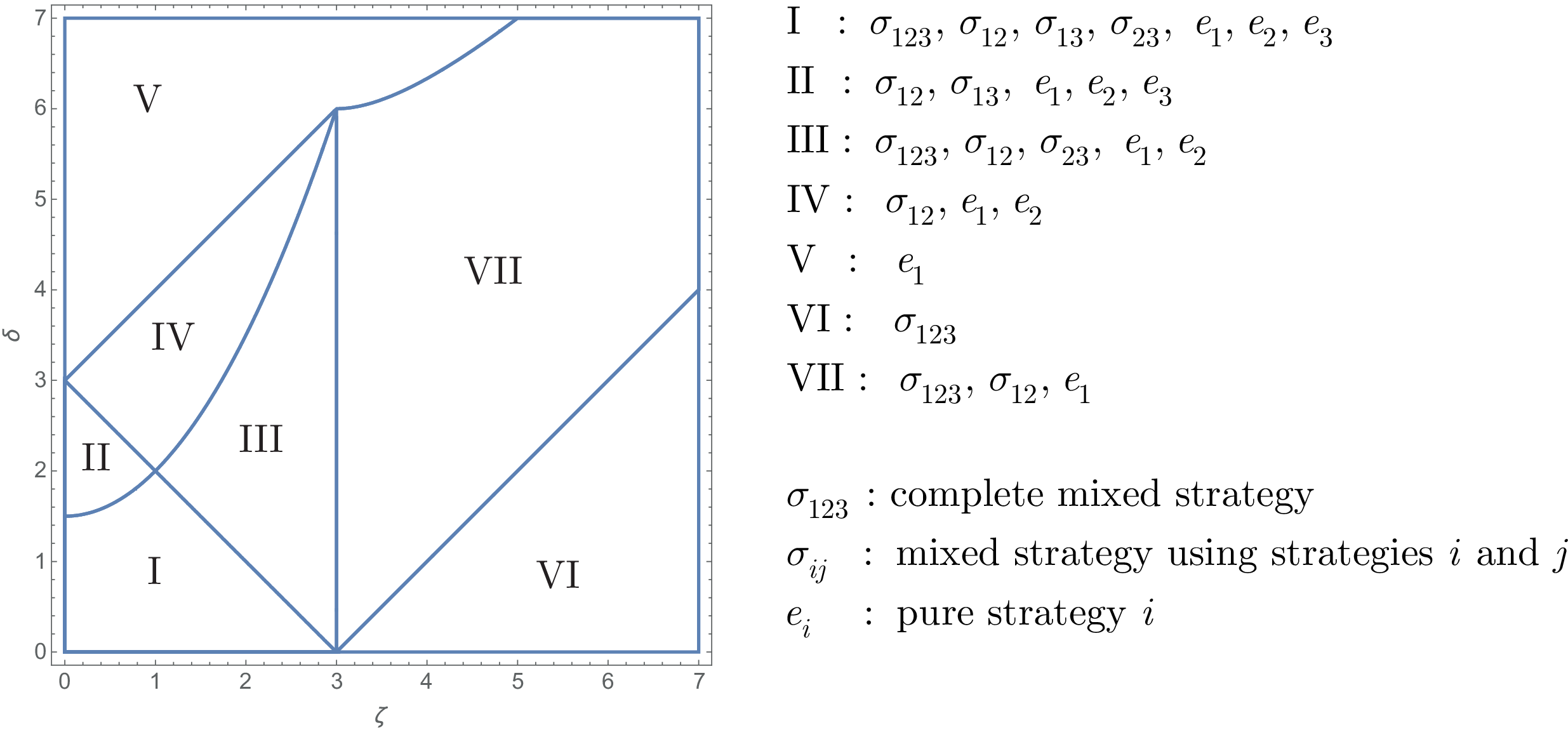}
  \caption[]{\textbf{Nash equilibria under changes in the payoffs of component games.} We find Nash equilibria for the game defined in \eqref{eq:pertub}. In the figure, each region shows the values of $\zeta$ and $\delta$ under which Nash equilibria are found. The horizontal axis: $\zeta$. The vertical axis: $\delta$.
  %\end{minipage}

  }
  \label{fig:perturb}
\end{figure}

The bottom line of Figure \ref{fig:dep-eq} presents the decomposition of the symmetric game in Table \ref{tab:1}. In the middle line of Figure \ref{fig:dep-eq}, we show the Nash equilibria of the original game and the component games in the simplexes. In the top line we show the potential function for the identical interest normalized game in $\mathcal{I} \cap \mathcal{N}$, the function $\Phi$ for the zero-sum normalized game in $\mathcal{Z} \cap \mathcal{N}$ and the function $\Phi$ for the zero-sum equivalent potential game in $\mathcal{B}$. The decomposition of the game illustrates how each of the Nash equilibria of the original game is related to the Nash equilibria of component games. For example, the existence of the completely mixed strategy Nash equilibrium for the original game is related to the existence of the zero-sum normalized or identical interest games. Similarly, the existence of the pure strategy Nash equilibria, $(0,1,0)$ and $(0,0,1)$, is due to the existence of the identical interest normalized component.

To illustrate this relationship more explicitly, we present Figure \ref{fig:perturb} in which Nash equilibria are computed under various values of $\zeta$ and $\delta$ for a symmetric game, defined by
  %\begin{minipage}{\linewidth}
    \begin{equation} \label{eq:pertub}
        \begin{pmatrix}
          2 & -1 & -1 \\
          -1 & 2 & -1 \\
          -1 & -1 & 2
        \end{pmatrix}
     +  \zeta \begin{pmatrix}
          0 & -1 & 1 \\
          1 & 0 & -1 \\
          -1 & 1 & 0
        \end{pmatrix}
      +  \delta \begin{pmatrix}
          1 & 1 & 1 \\
          0 & 0 & 0 \\
          0 & 0 & 0
        \end{pmatrix}.
  \end{equation}
Here, a two-player symmetric game $f=(f^{(1)},f^{(2)})$ satisfies $f^{(1)}(x,y) = f^{(2)}(y,x)$ and a matrix can specify a bi-matrix game in Figure \ref{fig:dep-eq}. Note that when $\zeta=1$ and $\delta=1$, the game in \eqref{eq:pertub} becomes the finite game presented in Figure \ref{fig:dep-eq}. Figure \ref{fig:perturb} shows that when a game is close to the identical interest normalized game, equilibrium properties---the numbers of pure strategy and mixed strategy Nash equilibria---of this game are the same as the identical interest normalized game (Region I). Also, when the effect of the zero-sum normalized game, $\zeta$, is sufficiently strong, the corresponding game admits a unique uniform mixed strategy (the property of zero-sum normalized games; Region VI). Similarly, in the case where the effect of a zero-sum equivalent potential game is prevalent, the corresponding game has a unique dominant strategy (the property of zero-sum equivalent potential games; Region V).

Next, motivated by Figure \ref{fig:perturb}, we establish a more precise relationship between Nash equilibria of a given game and those of its component games. If we consider two-player symmetric games with $l$ strategies, then a game can be succinctly identified with an $l \times l$ matrix as in \eqref{eq:pertub}. Then we can explicitly find basis games for subspaces of $\mathcal{I} \cap \mathcal{N}$, $\mathcal{Z} \cap \mathcal{N}$ and $\mathcal{B}$ as follows.
We first define games $\{ S^{(ij)} \}_{ij}$ for identical interest normalized games and games $\{ Z^{(ij)} \}_{ij}$ for zero-sum normalized games:
\[
   S^{(ij)}_{k k'} =
   \begin{cases}
    1 & \text{ if } (k,k')=(i,i) \text{ or } (j,j)\\
    -1 &  \text{ if } (k,k')=(i,j)\text{ or } (j,i) \\
    0 & \text{ otherwise},
   \end{cases}
   \qquad
   Z^{(ij)}_{k k'} = \begin{cases}
                       -1 & \mbox{if } (k,k')=(1,i), (i,j),\text{ or } (j,1) \\
                       1  & \mbox{if } (k,k')=(1,j), (i,1),\text{ or } (j,i) \\
                       0  & \mbox{otherwise}.
                     \end{cases}
\]
Note that $Z^{(ij)}$ is a Rock-Paper-Scissor game involving strategies 1, $i$ and $j$ (see Table \ref{tab:basis}) and, as mentioned, the condition  for  a two player symmetric game to be a potential game can be obtained by the requirement that a game is orthogonal to all $Z^{(ij)}$'s (see \citet{HandR11}).
We also define the games, $D^{(i)}$ and $E^{(i)}$, for $\mathcal{B}$:
\[
    D^{(i)}_{kk'} = \begin{cases}
                      1, & \mbox{if } k=i \\
                      0, & \mbox{otherwise}
                    \end{cases}
    ,\qquad
    E^{(i)}_{kk'} = \begin{cases}
                      1, & \mbox{if } k'=i \\
                      0, & \mbox{otherwise}.
                    \end{cases}
\]
Then, we obtain the following bases for each subspace.
\begin{lem} \label{lem:basis}
We have the following results:\\
  (i) The set of games $\{S^{(ij)}\}_{i=1,\cdots, l, j>i}$ forms a basis for $\mathcal{I} \cap \mathcal{N}$ \\
  (ii) The set of games $\{Z^{(ij)}\}_{i=2,\cdots, l, j>i}$ forms a basis for $\mathcal{Z} \cap \mathcal{N}$ \\
  (iii) The set of games $\{D^{(i)}\}_{i=1,\cdots, l-1}, \{E^{(i)}\}_{i=1,\cdots, l}$ forms a basis for $\mathcal{B}$
\end{lem}
\begin{proof}
 See Lemma \ref{appen:lem-c1}. % in Appendix \ref{appen:app}.
\end{proof}
The proof of Lemma \ref{lem:basis} involves counting the dimension of each subspace and checking whether basis games are independent. Thus, for a given $G$, by our decomposition results, there exists a unique set of coefficients $\{\gamma_{ij} \}, \{\zeta_{ij}\}, \{\delta_i \}$ and $\{\eta_i\}$ such that
\begin{equation}\label{eq:decomp-rep}
  G= \underbrace{\sum_{i=1}^l \sum_{j=i+1}^l \gamma_{ij} S^{(ij)} }_{=: S}+ \underbrace{\sum_{i=2}^l \sum_{j=i+1}^l \zeta_{ij} Z^{(ij)}}_{=: Z}  + \underbrace{\sum_{i=1}^{l-1}\delta_{i} D^{(i)}}_{=: D}+ \sum_{i=1}^{l}\eta_{i} E^{(i)}
\end{equation}
and $G$ is strategically equivalent to $S+Z+D$: $G \sim S + Z + D$.
Denote by $\#(G)$, the number of Nash equilibria of $G$. First note that if $G$ is a symmetric game, then $\#(G) \leq  2^l-1$ (See Lemma 2 (d) and the Theorem in Quint \& Shubik (2002)).
Let
\[
    \underline \gamma := \min_k \min_{j \neq k} \gamma_{kj}, \,\,\, \bar \delta = \max_i \delta_i , \,\,\, \bar \zeta = \frac{(l-1)(l-2)}{2} \max_{j>i} |\zeta_{ij}|
\]
The following proposition identifies conditions under which the equilibrium property of an identical interest normalized component game determines the equilibrium property of the original game.

\begin{prop} \label{prop:number}
  Consider the decomposition in \eqref{eq:decomp-rep}.  \\
  (i) Suppose that $\gamma_{ij}<0$ for all $i,j$. Then $\#(G) = 1$. \com{In addition, if $D=\mathbf{0}$, then the uniform mixed strategy is a unique NE.} \\
  (ii) Suppose that $\gamma_{ij}>0$ for all $i,j$. Suppose that $\delta_i\geq 0$ for all $i$ and $\underline \gamma > \bar \delta + \bar \zeta$. Then $\#(G) = 2^l-1$.
\end{prop}
\begin{proof}
 See Proposition \ref{appen:prop:app1}.
\end{proof}
The result (i) in Proposition \ref{prop:number} is partially known in the literature. \citet{Hofbauer09} define a  class of games, strictly stable games, and show that a strictly stable game possesses a unique mixed strategy Nash equilibrium (For the definition of strictly stable games, see Appendix \ref{appen:app}.). In addition, Proposition \ref{prop:number} (i) shows that a sufficient condition for the strict stability of a game is given by the negative coefficients assigned to identical interest normalized basis games in decomposition. Proposition \ref{prop:number} (ii) also identifies the conditions under which the equilibrium properties of identical interest normalized component game determine equilibrium property of the original game. \citet{Kandori98} shows that a class of coordination games---games satisfying the total bandwagon property---possesses $2^l-1$ number of Nash equilibria. Proposition \ref{prop:number} (ii) shows that (1) the identical interest normalized games with the positive coefficients assigned to basis games satisfy the total bandwagon property and (2)  a game sufficiently close to an identical interest normalized game also satisfies the total bandwagon property. In other words, Proposition \ref{prop:number} (ii) shows how decomposition can be used to find a class of coordination games which satisfies the total bandwagon property.

\com{Finally, we point out that the coefficients, $\gamma_{ij}$, $\zeta_{ij}$, are easily computed by the orthogonal projections in \eqref{eq:orth-proj}, using the relationship
\[
    (\mathcal{ G S} G)_{ij} = \gamma_{ij},\,\, (\mathcal{ G A} G)_{ij} = - \zeta_{ij}
\]
for $i \neq j$ (see Appendix OOO).
}

\tocless \subsection{Contest games}
In this section, we study a class of games, called contest games, which include Tullok contests and all-pay auctions as special cases \citep{Konrad07}.  A contest game is an $n$-player game in which payoff functions are given by
\begin{equation}
f^{(i)}(s)=p^{(i)}(s_{i},s_{-i})v-c_{i}(s_{i})\,\,\text{for \ensuremath{i=1,\cdots,n}},\label{eq:contest-model}
\end{equation}
where $\sum_{i}p^{(i)}(s)=1$ and $p^{(i)}(s)\ge0$ for all $s \geq 0$,
$v>0$, $c_{i}(0)=0$ and $c_{i}(\cdot)$ is continuous, increasing and convex. Then, it is easy to verify that a contest game is a zero-sum equivalent game, since
    \begin{equation}\label{eq:w-fun}
          f^{(i)} \sim  (p^{(i)}(s_{i},s_{-i})-\frac{1}{n})v-\frac{1}{n-1}\sum_{j\ne i}(c_{i}(s_{i})-c_{j}(s_{j})) = : w^{(i)}(s_i, s_{-i}).
    \end{equation}
Thus, Proposition \ref{prop:zero-convex} may be applicable for equilibrium analysis.
The existing literature extensively studies the uniqueness of Nash equilibria under specific assumptions about $p^{(i)}$ (e.g., \citet{HillmanRiley89}, \citet{SO-contest97}). When $p^{(i)}(s)$ is given by
\begin{equation}\label{eq:r-s-games}
  p^{(i)}(s) = \begin{cases}
                \frac{s_i}{\sum_l s_l}, & \mbox{if } s_l > 0 \text{ for some } l  \\
                \frac{1}{n}, & \mbox{if } s_l = 0 \text{ for all } l,
              \end{cases}
\end{equation}
the contest game is called a rent-seeking game. Using Proposition \ref{prop:zero-convex},  we present a completely different way of showing uniqueness of Nash equilibria for rent-seeking games from the existing literature (\citet{SO-contest97}, \citet{Cornes05}). While the existing literature shows the uniqueness of Nash equilibria by using the property of aggregative games, we establish the uniqueness by using the strategic equivalence of the rent-seeking game to a zero-sum game ($w$) and the convexity of $w$. Thus, our approach  shows that recognizing zero-sum equivalent games via decomposition facilitates equilibrium analysis.

Though we adopt a rather simplifying assumption that $c_i(s_i)$ is linear, i.e.,
\[
c_i(s_i) = c_i s_i,
\]
where $c_i >0$ for all $i$, we believe that our method can be extended to the case where $c_i(\cdot)$ is convex (or non-linear) or to other classes of zero-sum equivalent games. We also set $v=1$ for simplicity.

The idea of showing the uniqueness of Nash equilibria is as follows. First, we show that $w^{(i)}(s_i, s_{-i})$ in \eqref{eq:w-fun} is convex in $s_{-i}$ and, thus, the set of Nash equilibria is convex (Lemma \ref{lem:contest1}). Second, we show that when we study the Nash equilibrium of the contest games using the $\Phi_f$ function, it is enough to examine the $\Phi_f$ function defined over the set of players whose action levels are strictly positive, namely the active player set, denoted by $P$, where $P \subset \{1,\cdots, n \}$ and $|P| \geq 2$ (Lemma \ref{lem:contest2}). Then we show that the $\Phi_f$ function defined over $P$ is strictly convex (Lemma \ref{lem:contest3}). This implies that each $P$ set admits at most one Nash equilibrium (Lemma \ref{lem:contest4}). That is, if we define $S(P)$
\[
    S(P):= \{ (s_1, \cdots, s_n) : s_i >0 \text{ for } i \in P \text{ and } s_j =0 \text{ for }  j \not \in P \},
\]
then each $S(P)$ contains at most one Nash equilibrium.  Finally, if there exist two different Nash equilibria, $s^*$ and $t^*$, such that $s^* \in S(P)$ and $t^* \in S(P')$ and $P \neq P'$, then the convexity of Nash equilibria implies that there are infinitely many Nash equilibria, which contradicts the fact that each distinctive $P$ can admit at most one Nash equilibrium. Thus, there exists a unique Nash equilibrium for rent-seeking games defined by \eqref{eq:r-s-games}(Proposition \ref{prop:contest}). We provide all detailed steps and lemmas in Appendix \ref{appen:app}.
\bigskip

\begin{prop} \label{prop:contest}
    The Nash equilibrium for a rent-seeking game defined in \eqref{eq:r-s-games} is unique.
\end{prop}
\begin{proof}
  See Proposition \ref{appen:prop-app2}.
\end{proof}
%\tocless  \section{Modification of payoff functions } \label{sec:mod}

\tocless \section{Conclusion \label{sec:con}}

In this study, we developed decomposition methods for classes
of games such as zero-sum equivalent games, zero-sum equivalent potential games, zero-sum normalized games, and identical interest normalized games. Our methods rely on the properties of commuting projections in the vector space of games, and identifications of subspaces of games by the ranges and kernels of these projections. The identifications are based on the characterizations for various classes of games.

Next, we showed that two-player finite zero-sum equivalent games have a unique Nash equilibrium. We then studied the class of zero-sum equivalent potential games and showed that two-player finite zero-sum equivalent potential games have generically a unique strictly dominant Nash equilibrium.  We also showed that identical interest normalized games and zero-sum normalized games have a uniform mixed strategy Nash equilibrium. Based on these, we provide two specific applications. In the first application, we demonstrate that decomposition can single out the effect of component games on the Nash equilibrium of the original game. In the second application, the uniqueness of Nash equilibria for rent-seeking games is shown based on the special property of zero-sum equivalent games.

\bigskip

\bigskip

\appendix
\section*{\large{Appendix: for publication}}
\com{
\tableofcontents
\com{
\noindent\textbf{\Large{}Appendix}{\Large \par} \medskip
}

\newpage
}
% \footnote{This appendix is organized following the suggestions of the anonymous referee's suggestion, which we would like to thank again.}
\renewcommand{\thesection}{A}
\section{Decomposition results\label{appen:decomp_games} }

\subsection{Decomposition via projections\label{appen:projection_games} }

Let $V$ be a vector space. We say that $V$ is the direct sum of $V_1$ and $V_2$ and write $V=V_1 \oplus V_2$ if any $x \in V$ can be written uniquely as $x=x_1+x_2$ with $x_i\in V_i$, $i=1,2$.
Recall that a linear map $P:V \rightarrow V$ is a projection if $P^2 = P$. Then $I-P$ is also a projection and from writing $x =Px + (I- P)x$ we obtain the direct sum decomposition:
\begin{equation}\label{eq:direct-sum}
  V=R(P) \oplus K(P) = R(P) \oplus R(I-P) = K(I-P) \oplus K(P)\,,
\end{equation}
where $R(P)$ and $K(P)$ are the range and kernel of the map $P$, respectively.
Note the following elementary properties for projections.
\begin{lem} \label{lem:vec-proj} Let $P_1$ and $P_2$ be two commuting projections on the vector space $V$. \\
(i) Then $P_1P_2$ is a projection
and
\begin{eqnarray}
R(P_1 P_2) &=& R(P_1) \cap R(P_2)\,, \label{P12R} \\
K(P_1 P_2) &=& K(P_1) + K(P_2) \,. \label{P12K}
\end{eqnarray}
%
%\smallskip
%\noindent (i) $P_1 P_2$ is a projection\\
%    (ii) $R(P_1 P_2) = R(P_1) \cap R(P_2)$ \\
%    (iii) $K(P_1 P_2) = K(P_1) + K(P_2)$ \\
%\item
(ii) $P_1 + P_2$ is a projection if and only if $P_1P_2=0$, in which case  we have
 \begin{eqnarray}
R(P_1+P_2)=R(P_1) \oplus R(P_2) \label{P1+2R} \,, \\
K(P_1+P_2)=K(P_1)\cap K(P_2) \,. \label{P1+2K}
\end{eqnarray}
%
%\noindent (i) $P_1 P_2$ is a projection\\
%    (ii) $R(P_1 P_2) = R(P_1) \cap R(P_2)$ \\
%    (iii) $K(P_1 P_2) = K(P_1) + K(P_2)$ \\
%    (iv) $R(P_1+P_2)=R(P_1) \oplus R(P_2) $ \\
%    (v) If $P_1P_2=0$, then $K(P_1+P_2)=K(P_1)\cap K(P_2) $
\end{lem}
\begin{proof}
Some of results are presented, for example, in Chapter 9 in \citet{Kreyszig89}. For completeness, we present the complete proof of all statements. \\
(i) We have $(P_1 P_2)^2=P_1^2 P_2^2 =P_1 P_2$ so $P_1P_2$ is a projection.

\noindent
To prove \eqref{P12R} note that if $x \in R(P_1P_2)$ then $x = P_1P_2y = P_2P_1 y$ and thus $x \in R(P_1)\cap R(P_2)$. Conversely if $x \in R(P_1)\ \cap R(P_2)$ then $x = P_1 y =P_2z$ and thus $x =P_1y =
P_1 P_2y +  P_1(I-P_2)y  = P_1P_2 y + (I-P_2)P_2 z = P_1P_2y$ and thus $x \in R(P_1P_2)$.

\noindent
To prove \eqref{P12K} note that if  $x \in K(P_1P_2)$ then $P_1 P_2x=0$ and then $x = P_2x +(I-P_2)x \in K(P_1) + K(P_2)$  since $P_1 P_2x=0$ and $P_2(I-P_2)x=0$.
Conversely if $x \in K(P_1) + K(P_2)$ then $x = y+z$ with $P_1y=P_2z=0$
and thus $P_1 P_2x = P_2 P_1 y + P_1 P_2 z=0$.

\noindent
(ii) Since $(P_1+P_2)^2 = P_1 + P_2 + 2P_1P_2$, $P_1+P_2$ is a projection if and only if $P_1 P_2=0$. The statements in (ii) by applying (i) to the projections $I-P_1$ and $I-P_2$ since $P_1P_2=0$ implies that $(I-P_1)(I-P_2)= I-(P_1+P_2)$. More specifically, to prove \eqref{P1+2R} we use \eqref{P12K} to obtain
\begin{eqnarray}
R(P_1+P_2)&=&K\left( (1-P_1)(I-P_2)\right)
= K( I-P_1)+ K( I-P_2 )
\nonumber \\
 &=& R(P_1)+ R(P_2) = R(P_1) \oplus R(P_2) \,,
\end{eqnarray}
where in the last equality we used that by \eqref{P12R}  $R(P_1)\cap R(P_2)=R(P_1P_2)= \{0\}$. To prove \eqref{P1+2K} we use \eqref{P12R} and obtain
\begin{eqnarray}
K(P_1+P_2)&=&  R\left( (1-P_1)(I-P_2)\right)=R(I-P_1) \cap R(I-P_2) \nonumber \\
&=& K( P_1) \cap  K( P_2 ) \,.
\end{eqnarray}
\end{proof}

%2. If $(P_1+P_2)^2 = P_1 + P_2 + P_1P_2 +P_2P_1= P_1 +P_2$ if and only if $P_1P_2=0$.
%
%If $x \in K(P_1P_2)$ then $P_1P_2x=0$ and then $x = P_2x +(I-P_2)x \in K(P_1) + K(P_2)$  since $P_1 P_2x=0$ and $P_2(I-P_2)x-0$.
%Conversely if $x \in K(P_1) + K(P_2)$ then $x = y+z$ with $P_1y=P_2z=0$
%and thus $P_1P_2x = P_2 P_1y + P_P_2 z=0$.
%
%
%
%  (i)-(ii) 9.5-3 Theorem in Kreyszig (1989) \\
%  (iii) Let $x \in K(P_1 P_2)$. Then $P_1 P_2 x = 0$. Let $y=P_2 x$. Then $x= y + x-y$. And $P_1y = P_1 P_2 x=0$, so $y \in K(P_1)$ and $P_2(x-y)=P_2 x - P_2^2 x= 0$, so $(x-y) \in K(P_2)$. Now conversely, let $x \in K(P_1) + K(P_2)$. Then $x=y+z$ for some $y \in K(P_1)$ and $z \in K(P_2)$. Then, $P_1 x= P_1 z$ and $P_2 P_1 x = P_2 P_1 z = P_1 P_2 z =0$. Thus $x \in K(P_1 P_2)$. \\
%  (iv) 9.5-4 Theorem in Kreyszig (1989). (v) Let $x \in K(P_1) \cap K(P_2)$. Then $P_1 x =0$ and $P_2 x=0$. Thus, $(P_1 + P_2)x=P_1 x + P_2=0$. Let $x \in K(P_1+P_2)$. Then $P_1x +P_2x =0$. Thus $P_1(P_1x +P_2x)=0$ and $P_1x=0$. Thus $x \in K(P_1)$. Also $P_2(P_1x+ P_2x) =0$. Thus $x \in K(P_2)$.
%\end{proof}

From Lemma \ref{lem:vec-proj} we obtain the following vector decompositions.
\begin{prop} \label{appen:vec-decomp}
Let $P_1$ and $P_2$ be commuting projections such that $P_1(I- P_2) =0$.
Then we have
\begin{align}
     \textbf{D1}  \quad V  = & R(P_1) \oplus K(P_1)  \label{eq:vec-d1}\,,\\
     \textbf{D2}  \quad V  = & R(P_2) \oplus K(P_2) \label{eq:vec-d2} \,,\\
     \textbf{D3}  \quad V  = & R(P_1) \oplus K(P_2) \oplus (K(P_1) \cap R(P_2)) \label{eq:vec-d3}\,.
\end{align}
\end{prop}
\begin{proof}
The decompositions \eqref{eq:vec-d1} and  \eqref{eq:vec-d1} are immediate from \eqref{eq:direct-sum}.  For \eqref{eq:vec-d3} note that $P_1$ commute with $I-P_2$ and thus by Lemma \ref{lem:vec-proj} $P_1 + I-P_2$ is a projection. From  \eqref{P1+2R} and \eqref{P1+2K} we obtain
\begin{align*}
   &  K(P_1+ I - P_2) = K(P_1) \cap K(I-P_2) =  K(P_1) \cap R(P_2)\\
   & R(P_1 + I - P_2) = R(P_1) \oplus R(I-P_2) = R(P_1) \oplus K(P_2)
\end{align*}
from Lemma \ref{lem:vec-proj} (ii) and thus \eqref{eq:vec-d3} by \eqref{eq:direct-sum}.
\end{proof}

\subsection{Game-Theoretic Applications}

 Let $f:=(f^{(1)}, \cdots, f^{(n)}): S \rightarrow \mathbb{R}^n$
 where $f^{(i)}:S \rightarrow \mathbb{R}$ for $i=1,\cdots,n$.
We let
\begin{equation}
    \left\Vert f^{(i)} \right\Vert_1 :=\int\left\vert f^{(i)}\right\vert dm, \,\,\, \quad
   \left\Vert f\right\Vert :=\sum_{i=1}^{n}\int\left\vert f^{(i)}\right\vert dm\,. \,\,\,
\end{equation}
We let $L(S,\mathbb{R};m)=\left\{ u:S\to\mathbb{R}\,;\,u\mbox{ is measurable and }\|u\|_1 < \infty\right\}$
and consider the space of games payoffs given by the following vector space:
\begin{align*}
  \mathcal{L}:= & L(S,\mathbb{R}^{n};m)=\left\{ f:S\to\mathbb{R}^{n}\,;\,f\mbox{ is measurable and }\|f\| < \infty\right\} \,.
   %& L(S,\mathbb{R};m)=\left\{ u:S\to\mathbb{R}\,;\,u\mbox{ is measurable and }\|u\|_1 < \infty\right\}.
\end{align*}
Note that $f\in\mathcal{L}$ if and only if $f^{(i)}\in L(S,\mathbb{R};m)$, for each $i=1,2,\cdots,n$.
Recall the operator $T_i u= \frac{1}{|S_i|}\int_{S_i} u(s) \, dm_i(s_i)$  given in equation \eqref{eq:def-t}.

\begin{lem} \label{lem:ti}
We have the following results: \\
(i) The operators $T_{i}$ are projections on $L(S,\mathbb{R};m)$. \\
(ii) The projections $T_{i}$ and $T_{j}$ commute for any $i,j$ and any product of the form $T_{i_{1}} \cdots T_{i_{k}} (I-T_{j_1}) \cdots (I-T_{j_l})$
 is a projection on $L(S,\mathbb{R};m)$.
\end{lem}
\begin{proof}
(i) If $u \in L(S, \mathbb{R}, m)$ then by Fubini theorem $u(s)=u(s_i, s_{-i})$ is integrable with respect to $m_i(s_i)$ (for almost every $s_{-i}$) so that  $T_i u$ is well defined and by Fubini theorem again, $T_i u$ is integrable with respect to $\prod_{l \not =i} m_l(s_l)$.  Since $T_i u$ does not depend on $s_i$,  $T_i u$ is integrable with respect to $m_i$ and we have  $T_i^2 u = T_i u$ and we have $|T_iu| \le T_i|u|$.  By Fubini theorem again
\[
\|T_iu\|_1=\int |T_i u| dm = \int  |T_i u| dm_i(s_i) \prod_{l \not =i} dm_i(s_i)  \le \int  T_i |u| dm_i(s_i) \prod_{l \not =i} dm_i(s_i) \,=\, \|u\|_1\,.
\]
and thus, $T_i$ is bounded.
%
%
%then by Fubini $u(s)=u(s_i, s_{-i})$ is integrable with respect to $s_i$ and $T_i$ is well defined and obviously
%linear and we have  $|T_i u| \le T_i|u|$.  By Fubini, with $|S| = \prod_{l=1}^n |S_l|$, we have
%$ T_1 \cdots T_n |u|= \frac{1}{|S|} \int |u| dm_1 \cdots dm_s = \frac{1}{|S|} \|u\|_1$ and thus
%\[
% \frac{1}{|S|}\|T_i u\|_1 =  T_1 \cdots T_n (|T_i u|) \le  T_1 \cdots T_n T_i |u| =
%  T_1 \cdots T_n|u| = \frac{1}{|S|} \|u\|_1 \,.
%\]
%It is easy to check that $T_i^2=T_i$ and thus $T_i$ defines a projection on
%$L(S, \mathbb{R}, m)$. \\
(ii) Projections $T_i$ and $T_j$ commute by Fubini theorem and thus by Lemma \ref{lem:vec-proj} any product of $T_i$'s and $I-T_j$'s is again a projection.
% by Fubini Theorem
%\begin{align*}
%    \left\Vert T_i u \right \Vert_1  & =\int_S \left|\frac{1}{|S_i|} \int_{S_i} u(s) dm_i(s_i)\right| dm(s) \leq \int_S  \frac{1}{|S_i|}  \int_{S_i} |u(s)| dm_i(s_i) dm(s) \\
%   &   = \int_{S_{-i}} \int_{S_{i}} \frac{1}{|S_i|}  \int_{S_i} |u(s)| dm_i(s_i) dm_i(s_i) d m_{-i}(s_{-i}) =\int_{S_{-i}} \int_{S_{i}} |u(s)|  dm_i(s_i) d m_{-i}(s_{-i}) = \left\Vert u \right\Vert_1
%\end{align*}
%Thus, $T_i$ is bounded and obviously $T_i$ is linear. \\
%(2) The fact that $T_{i}^{2}=T_{i}$ follows immediately from the definition.  \\
%(3) The fact that $T_{i}$ and $T_{j}$ commute again follows from Fubini's theorem. \\
%(4) A product of commuting projections is always a projection. \\
%(5) This is obvious.
 \end{proof}

In the next lemma we prove the basic properties of the projections introduced in Section 2, see the operators $\mathbf{S}$ and $\mathbf{P}$
defined in \eqref{eq:symm_opt},  $\mathbf{G}$ defined in \eqref{eq:orth-proj} and $\mathbf{V}$ defined in \eqref{eq:v-proj}.  Also recall the convenient notation from \eqref{eq:u_M}: for any $M \subset N \,=\,\{1,\cdots,n \}$  we set $u_M = \prod_{l \not \in M}T_l \prod_{k \in M}(I-T_k)u$. By Lemma \ref{lem:ti} the map $u \mapsto u_M$ is a projection.

\begin{lem} \label{lem:proj-space}
We have the following results: \\
(i) $\mathbf{S}$, $\mathbf{P}$, and $\mathbf{G}$ are projections. \\
(ii) $\mathbf{S}$ and $\mathbf{G}$ commute and hence $\mathbf{SG}$ is a projection. \\
(iii) $\mathbf{V}$ is a projection. \\
(iv) $\mathbf{V}$ and $\mathbf{P}$ commute and $\mathbf{VP}=0$, hence $\mathbf{V+P}$ is a projection. \\
(v) $\mathbf{SG}$ and $\mathbf{I-V-P}$ commute and  $\mathbf{SG} (\mathbf{I-V-P})=0$
\end{lem}
\begin{proof}
(i) It is easy to check that $\mathbf{S}$ is a projection and for $\mathbf{P}$ and $\mathbf{G}$ this follows from Lemma \ref{lem:ti} since $(\mathbf{P}f)^{(i)}= T_if^{(i)}$  $(\mathbf{G}f)^{(i)}=f_N^{(i)}$    \\
(ii) We have
\[
    (\mathbf{SG} f)^{(i)}=  \frac{1}{n} \sum_{j=1}^{n} \prod_{l=1}^{n} (I-T_l)f^{(j)} =\prod_{l=1}^{n} (I-T_l)\frac{1}{n} \sum_{j=1}^{n} f^{(j)} =(\mathbf{GS}f)^{(i)}\,.
\]
(iii) First observe that
\begin{equation}\label{eq:con}
  (u_M)_{M'} = \begin{cases}
               0, & \mbox{if } M \neq M' \\
               u_M, & \mbox{if } M = M'
             \end{cases} \,.
\end{equation}
Equation \eqref{eq:con} implies that
if
\[
    (h^{(1)}, \cdots, h^{(n)}) = \mathbf{V} f = (\sum_{M \ni 1} \frac{1}{|M|} \sum_{l \in M} f_M^{(l)}, \cdots, \sum_{M \ni n} \frac{1}{|M|} \sum_{l \in M} f_M^{(l)})
\]
then for any $M$ and $i,j \in M$ we have $h^{(i)}_M = h^{(j)}_M = \frac{1}{|M|}\sum_{l \in M} f^{(l)}_M$ which does not depend on $i,j$. Therefore
\[
\frac{1}{|M|} \sum_{ l \in M} h^{(l)}_M = \frac{1}{|M|}\sum_{l \in M} f^{(l)}_M \,.
\]
This implies that $\mathbf{V}^2=\mathbf{V}$ and thus $\mathbf V$ is a projection. \\
%
%
%Now let $l$ and $M$ such that $l \in M'$. Using \eqref{eq:con}, we find that
%\[
%    h_{M'}^{(l)} = \frac{1}{|M'|} \sum_{j \in M'} f_{M'}^{(j)}.
%%\]
%Then we find that
%\begin{align*}
%   (\mathbf V^2 f)^{(i)} & = \sum_{M' \ni i } \frac{1}{|M'|} \sum_{l \in M'} h^{(l)}_{M'} =\sum_{M' \ni i } \frac{1}{|M'|} \sum_{l \in M'} \frac{1}{|M'|} \sum_{j \in M'} f_{M'}^{(j)} =\sum_{M' \ni i } \frac{1}{|M'|} \sum_{j \in M'} f_{M'}^{(j)} \sum_{l \in M'} \frac{1}{|M'|} \\
%   &  =\sum_{M' \ni i } \frac{1}{|M'|} \sum_{j \in M'} f_{M'}^{(j)} = (\mathbf Vf)^{(i)}.
%\end{align*}
%Hence $\mathbf V$ is a projection. \\
(iv) Note that if $i \in M$ we have
\[ T_i u_M = T_i \prod_{l \notin M} T_l \prod_{k \in M} (I-T_k) u =
\prod_{l \notin M} T_l \prod_{k \in M} (I-T_k) (T_i u) = (T_i u)_M =0
\]
where the last inequality follows from $\displaystyle T_i \prod_{k \in M} (I-T_k) =
T_i(I-T_i)  \prod_{\substack {k \in M \\ k \not=i}} (I-T_k)=0$.
Therefore
\begin{align}
(\mathbf{PV} f)^{(i)} & = T_i (\mathbf{V}f)^{(i)} = T_i \sum_{M \ni i} \frac{1}{|M|} \sum_{j \in M} f_M^{(j)} =0 \\
(\mathbf{VP}f)^{(i)}&= \sum_{M \ni i} \frac{1}{|M|} \sum_{j \in M} (T_jf^{(j)})_M = 0
\end{align}
Thus $\mathbf{PV}=\mathbf{VP}=0$ and by Lemma \ref{lem:vec-proj} $\mathbf{V} + \mathbf{P}$ is a projection.
\\
%We have
%\[
%    (\mathbf{VP} f)^{(i)}=\sum_{M \ni i} \frac{1}{|M|} \sum_{j \in M} \prod_{l \notin M} \prod_{k \in M} T_l (I-T_k) T_j f^{(j)} = \sum_{M \ni i} \frac{1}{|M|} \sum_{j \in M}  \prod_{l \notin M} \prod_{\substack{k \in M\\ k \neq j}} T_l (I-T_k)(I-T_j) T_j f^{(j)} =0
%\]
%and
%\[
%    (\mathbf{PV} f)^{(i)}=T_i \sum_{M \ni i} \frac{1}{|M|} \sum_{j \in M} \prod_{l \notin M} \prod_{k \in M} T_l (I-T_k)  f^{(j)} =\sum_{M \ni i} \frac{1}{|M|} \sum_{j \in M} \prod_{l \notin M} \prod_{\substack{k \in M\\k \neq i}} T_l (I-T_k)(I-T_i)  T_i f^{(j)} =0
%\]
%Thus Lemma \ref{lem:vec-proj} shows that $\mathbf{V+P}$ is a projection. \\
(v) Note first that the fact $\mathbf{SGP} = \mathbf{PSG}=0$ is proved exactly as in (iv) since $(\mathbf{SG}f)^{(i)} = \frac{1}{|N|}\sum_{j \in N} f^{(j)}_N$.
Moreover, if we let
\[
   (\mathbf{D} f)^{(i)}= \sum_{\substack{M \ni i \\ M \neq N} } \frac{1}{|M|} \sum_{j \in M} f_M^{(j)}
\]
we have $\mathbf{V} = \mathbf{SG} + \mathbf{D}$.  Therefore we have
\begin{equation}\label{eq:sgd}
\mathbf{SG}( \mathbf{I} - \mathbf{V} + \mathbf{P})= \mathbf{SG} \mathbf{D} \quad \textrm{ and }
( \mathbf{I} - \mathbf{V} + \mathbf{P})\mathbf{SG} =\mathbf{D} \mathbf{SG}
\end{equation}
To conclude, recall that $(u_M)_{M'}=0$ if $M \not= M'$ in \eqref{eq:con}. From the
definition of $\mathbf{D}$ we note that $((\mathbf{D}f)^{(i)})_N=0$ and thus
\[
(\mathbf{SG} \mathbf{D} f)^{(i)} = \frac{1}{|N|} \sum_{j \in N} ((\mathbf{D}f)^{(j)})_N \,=\,0\,.
\]
On the other hand we have $((\mathbf{SG}f)^{(i)})_M =0$ if $M \not= N$
and thus
\[
( \mathbf{D} \mathbf{SG} f)^{(i)} \,=\, \sum_{\substack{M \ni i \\ M \neq N} } \frac{1}{|M|} \sum_{j \in M} ((\mathbf{SG}f)^{(i)})_M =0
\]
%
%
%
%Note that
%\begin{align*}
%    (\mathbf{SGP}f)^{(i)}= &  \frac{1}{n}\sum_{j=1}^{n} \prod_{l=1}^n (I-T_l) T_j f^{(j)} =0 \\
%    (\mathbf{PSG}f)^{(i)}= &  T_i  \frac{1}{n}\sum_{j=1}^{n} \prod_{l=1}^n (I-T_l) f^{(j)} =0 \\
%    (\mathbf{SGD}f)^{(i)}= &  \frac{1}{n}\sum_{i=1}^{n} \prod_{l'=1}^n (I-T_{l'})  \sum_{\substack{M \ni i \\ M \neq N} } \frac{1}{M} \sum_{j \in M} \prod_{l \notin M} \prod_{k \in M} T_l (I-T_k)   f^{(j)}  \\ = & \frac{1}{n}\sum_{i=1}^{n}  \sum_{\substack{M \ni i \\ M \neq N} } \frac{1}{M} \sum_{j \in M} \prod_{l'=1}^n (I-T_{l'})  \prod_{l \notin M} \prod_{k \in M} T_l (I-T_k)   f^{(j)} =0 \\
%    (\mathbf{DSG}f)^{(i)}= &  \sum_{\substack{M \ni i \\ M \neq N} } \prod_{l \notin M} \prod_{k \in M} T_l (I-T_k) \frac{1}{n}\sum_{j=1}^{n} \prod_{l'=1}^n (I-T_{l'}) f^{(j)} \\
%     = & \sum_{\substack{M \ni i \\ M \neq N} } \frac{1}{n}\sum_{j=1}^{n} \prod_{l \notin M} \prod_{k \in M} T_l (I-T_k) \prod_{l'=1}^n (I-T_{l'}) f^{(j)} =0
%\end{align*}
This proves that $\mathbf{SG} \mathbf{D}= \mathbf{D}\mathbf{SG} =0$ and therefore by \eqref{eq:sgd} (v) holds.
\end{proof}
%	
%	Then
%	\begin{align*}
%	   &   \mathbf{SG} (\mathbf{I-V-P}) = \mathbf{SG} (\mathbf{I-SG-D-P})=0 \\
%	   &   (\mathbf{I-V-P}) \mathbf{SG}  = (\mathbf{I-SG-D-P}) \mathbf{SG} =0
%	\end{align*}
%	
%	\end{proof}

\noindent \textbf{Proof of Theorem \ref{thm:main}.}
We let
\[
    P_1 = \mathbf{SG}, \,\,\, P_2 = \mathbf{V + P}
\]
in \eqref{eq:vec-d1}, \eqref{eq:vec-d2}, and \eqref{eq:vec-d3}. From Lemma \ref{lem:proj-space} and Proposition \ref{appen:vec-decomp}, we have the following decomposition:
%\begin{alignat}{2}
%     \textbf{D1}  \quad \mathcal{L}  = & R(\mathbf{SG}) \oplus K(\mathbf{SG})  & &  \label{eq:d1}  \\
%     \textbf{D2}  \quad \mathcal{L}  = & R(\mathbf{V+P}) \oplus K(\mathbf{V+P}) & &=R(\mathbf{SG+D+P}) \oplus K(\mathbf{SG+D+P}) \label{eq:d2} \\
%     \textbf{D3}  \quad \mathcal{L}  = & R(\mathbf{SG}) \oplus K(\mathbf{V+P}) \oplus (K(\mathbf{SG}) \cap R(\mathbf{V+P})) & & = R(\mathbf{SG}) \oplus K(\mathbf{SG+D+P}) \oplus R(\mathbf{D+P}))  \label{eq:d3}
%\end{alignat}
\begin{alignat}{2}
     \textbf{D1}  \quad \mathcal{L}  = & R(\mathbf{SG}) \oplus K(\mathbf{SG})  \label{eq:d1}  \\
     \textbf{D2}  \quad \mathcal{L}  = & R(\mathbf{V+P}) \oplus K(\mathbf{V+P}) \label{eq:d2} \\
     \textbf{D3}  \quad \mathcal{L}  = & R(\mathbf{SG}) \oplus K(\mathbf{V+P}) \oplus (K(\mathbf{SG}) \cap R(\mathbf{V+P}))   \label{eq:d3}
\end{alignat}
Then from Propositions \ref{appen-prop:iinorm}, \ref{prop:zero-equiv}, \ref{appen-prop:pot-char}, and \ref{prop:zero-norm}, proven below we obtain the decompositions in Theorem \ref{thm:main}
$\square$.

Note that we also have
\begin{equation}\label{eq:char-b}
  K(\mathbf{SG}) \cap R(\mathbf{V+P})= K(\mathbf{SG}) \cap K(\mathbf{I-(V+P)})=K(\mathbf{I-(D+P)})= R(\mathbf{D+P}).
\end{equation}
Next, we will show that each of these ranges and kernels in \eqref{eq:d1}, \eqref{eq:d2}, and \eqref{eq:d3} are indeed equivalent to the corresponding subspaces presented in Theorem \ref{thm:main}.

\begin{prop}[\textbf{Identical interest normalized games}] \label{appen-prop:iinorm}
We have
\[
R(\mathbf{SG}) = \mathcal{I} \cap \mathcal{N}.
\]
\end{prop}
\begin{proof}
Let $f \in R(\mathbf{SG})$. Then $f = \mathbf{SG}f=\mathbf{GS} f$. Thus $f \in \mathcal{I}\cap \mathcal{N}$. Suppose that $f \in \mathcal{I}\cap \mathcal{N}$. Then $f=(h,\cdots, h)$ for some scalar valued function $h$ such that
\[
    \int_{S_i} h(s) d m_i(s_i) =0 \,\,\text{ for all } i.
\]
Thus, $ \mathbf{S} f = f$. Also since  $(I-{T}_i) h = h - \int_i h dm_i = h $ for all $i$, we have $\mathbf{G} f= f$. Thus $\mathbf{SG} f = \mathbf{G} f = f$ and we have $f \in  R(\mathbf{SG})$.  \end{proof}

\begin{prop}[\textbf{Zero-sum equivalent games}] \label{prop:zero-equiv}
We have
\[
    K(\mathbf{SG}) = \mathcal{Z}+\mathcal{E}
\]
\end{prop}
\begin{proof}
We first have the following equivalence: $  f \in K(\mathbf{SG}) \iff \mathbf{SG}  f =0$. Thus we will show that $ \mathbf{SG} f =0 \iff f \in \mathcal{Z} + \mathcal{E}$ or equivalently,
    \[
     \sum_{i=1}^{n}f^{(i)}\in K(\prod_{l=1}^{n}(I-T_{l})) \iff f\in\mathcal{Z}+\mathcal{E}
    \]
\noindent $(\impliedby)$    If $f\in\mathcal{Z}+\mathcal{E}$,
then $f^{(i)}=g^{(i)}+h^{(i)}$, where $\sum_{i=1}^{n}g^{(i)}=0$
and $h^{(i)}\in R(T_{i})$. Therefore, we have
\[
\sum_{i=1}^{n}f^{(i)}=\sum_{i=1}^{n}T_{i}q^{(i)},
\]
for some $q^{(1)},\cdots,q^{(n)}$, and clearly, we have $\left(\prod_{l=1}^{n}(I-T_{l})\right)(\sum_{i=1}^{n}T_{i}q^{(i)})=0$.

\noindent$(\implies)$ Conversely, suppose that $\sum_{i=1}^{n}f^{(i)}\in K(\prod_{l=1}^{n}(I-T_{l}))$.
Then, for each $i$,
\[
f^{(i)}= \prod_{l=1}^{n}(I-T_{l}) f^{(i)} + (I-\prod_{l=1}^{n} (I-T_{l}))  f^{(i)} =:m^{(i)}+n^{(i)}.
\]
Then $ \sum_{i=1}^{n} m^{(i)} = \sum_{i=1}^{n} \prod_{l=1}^n (I-T_l) f^{(i)} =0 $ because $\sum_{i=1}^{n}f^{(i)}\in K(\prod_{l=1}^{n}(I-T_{l}))$. Also since $n^{(i)}\in K(\prod_{l=1}^{n}(I-T_{l}))=R(T_{1})+\cdots+R(T_{n})$,
we have, for each $i$,
\[
n^{(i)}\,=\,\sum_{j=1}^{n}T_{j}n_{j}^{(i)},
\]
for some $\{n_{j}^{(i)}\}_{j=1}^{n}$. In this way, we find $\{n_{j}^{(i)}\}_{i,j}$.
For each $i$, we write
\[
n^{(i)}\,=\,\left(\sum_{j=1}^{n}T_{j}n_{j}^{(i)}-\sum_{j=1}^{n}T_{i}n_{i}^{(j)}\right)+\underbrace{\sum_{j=1}^{n}T_{i}n_{i}^{(j)}}_{\in R{(T_{i})}}\,.
\]
Then, we have
\[
\sum_{i=1}^{n}(\sum_{j=1}^{n}T_{j}n_{j}^{(i)}-\sum_{j=1}^{n}T_{i}n_{i}^{(j)})\,=\,0.
\]
Thus $f=(f^{(1)},\cdots, f^{(n)})$ can be written as
\begin{align*}
    = & \underbrace{(m^{(1)},\cdots, m^{(n)}) + (\sum_{j=1}^{n}T_{j}n_{j}^{(1)}-\sum_{j=1}^{n}T_{1}n_{1}^{(j)}, \cdots, \sum_{j=1}^{n}T_{j}n_{j}^{(n)}-\sum_{j=1}^{n}T_{n}n_{n}^{(j)})}_{\in \mathcal{Z}} \\ + & \underbrace{(\sum_{j=1}^{n}T_{1}n_{1}^{(j)}, \cdots, \sum_{j=1}^{n}T_{n}n_{n}^{(j)})}_{\in \mathcal{E}}
\end{align*}
This shows that $f\in\mathcal{Z}+\mathcal{E}$, and concludes the
proof of the claim. \\
\end{proof}
The following proposition gives detailed  characterizations for potential games.
\begin{prop}[\textbf{Potential games}] \label{appen-prop:pot-char}
The following statements are equivalent. \\
 (i) $f$ is a potential game. \\
 (ii)
 \begin{equation}\label{eq:pot-con}
   (I-T_i)(I-T_j) f^{(i)} =(I-T_i)(I-T_j) f^{(j)} \textrm{ for all } i,j \,.
 \end{equation}
 (iii)
 \begin{equation}\label{eq:appen-pot-con2}
  f^{(i)}_M = \frac{1}{|M|} \sum_{j \in M} f^{(j)}_M \textrm{ and for all }  i \in M\, \textrm{ for all non-empty } M .
\end{equation}
(iv)
\begin{equation}\label{eq:pot-con3}
  f^{(i)} = T_i f^{(i)} +  \sum_{M \ni i} \frac{1}{|M|}  \sum_{j \in M }f^{(j)}_M  \textrm { for all } i.
  \end{equation}

\noindent Thus we have
\[
    R(\mathbf{V+P})=\mathcal{I}+\mathcal{E}.
\]
\end{prop}
\begin{proof}
((i) $\implies$ (ii)) Suppose that $f$ is a potential game. Then for all $i$,  $f^{(i)}(s)= \phi + T_i h^{(i)}$ for some $h^{(i)}$. Thus we have
\[
    (I-T_i)(I-T_j)(f^{(i)}(s)-f^{(j)}(s)) = (I-T_i)(I-T_j)(T_i h^{(i)}(s_{-i})- T_j h^{(j)}(s_{-j}))=0
\]
\noindent ((ii) $\implies$ (iii)) Suppose that condition \eqref{eq:pot-con} holds and let $M \ni \{i,j\}$. Then
\begin{align}
f_M^{(i)}=& \prod_{l \notin M} T_l \prod_{k \in M}(I-T_k) f^{(i)}  \nonumber \\
=& \prod_{l \notin M} T_l \prod_{\substack{k \in M
\\ i,j \notin M} }(I-T_k) (I-T_i)(I-T_j) f^{(i)} \nonumber\\
=& \prod_{l \notin M} T_l \prod_{\substack{k \in M
\\ i,j \notin M} }(I-T_k) (I-T_i)(I-T_j) f^{(j)}
= f_M^{(j)}\nonumber
\end{align}
and therefore $f_M^{(i)}$ is independent of $i$ if $i \in M$ and thus \eqref{eq:appen-pot-con2} holds.\\
\noindent ((iii) $\implies$ (iv)) If \eqref{eq:appen-pot-con2} holds, we have
\begin{align}
f^{(i)} =& T_i f^{(i)} + (I-T_i)f^{(i)} \nonumber\\
=&  T_i f^{(i)} + \sum_{M \ni i} f_M^{(i)} \nonumber\\
=&  T_i f^{(i)} + \sum_{M \ni i} \frac{1}{|M|} \sum_{j\in M} f_M^{(j)} \,. \nonumber
\end{align}
which establishes \eqref{eq:pot-con3}. \\
\noindent ((iv) $\implies$ (i))    Suppose that \eqref{eq:pot-con3} holds. Then
  \[
    f^{(i)} = \sum_{M \ni i} \frac{1}{|M|}\sum_{j \in M }f^{(j)}_M +T_i f^{(i)} =  \underbrace{\sum_{\substack{M \subset N \\ M \neq \emptyset}} \frac{1}{|M|} \sum_{j \in M }f^{(j)}_M}_{\let \scriptstyle \textstyle \substack{ \equiv \phi}} \underbrace{ -\sum_{\substack{M \not \in i \\ M \neq \emptyset}} \frac{1}{|M|} \sum_{j \in M }f^{(j)}_M +T_i f^{(i)}}_{\let \scriptstyle \textstyle \substack{ \equiv h^{(i)} }}
  \]
  where the first term $\phi$ does not depend on $i$
  and the second term $h^{(i)}$ satisfies $h^{(i)}=T_ih^{(i)}$ since $f_M^{(j)}=T_i f_M^{(j)}$ if $ i \notin M$. This shows that $f$ is a potential game.
\end{proof}

\begin{prop}[\textbf{Zero-sum normalized games}] \label{prop:zero-norm}
    We have
    \[
        K(\mathbf{V+P}) =  \mathcal{Z} \cap \mathcal{N}
    \]
\end{prop}
\begin{proof}
 We will show that
 $f$ is a zero-sum normalized game if and only if
    \begin{equation}\label{eq:char-zn}
      \sum_{M \ni i } \frac{1}{|M|} \sum_{j \in M} f_M^{(j)} + T_i f^{(i)} =0
    \end{equation}
    for all $i$.
  Suppose that $f$ is a zero-sum normalized game. Let $i$ be fixed. Then since $f$ is normalized, $T_i f^{(i)}=0$. Also if $j \not \in M$, then $f^{(j)}_M=0$, again since $f$ is normalized. Thus if $f$ is normalized, then
  \[
    \sum_{j \not \in M} f_M^{(j)} = 0.
  \]
  Thus we find that
  \[
    \sum_{M \ni i } \frac{1}{|M|} \sum_{j \in M} f_M^{(j)} = \sum_{M \ni i } \frac{1}{|M|} \sum_{j =1}^n f_M^{(j)} =0
  \]
  since $f$ is a zero-sum game. \\
  Now suppose that \eqref{eq:char-zn} holds. Let $i$ be fixed and $M' \ni i$. Applying $\prod_{l \not \in M'} T_l \prod_{k \in M'} (I-T_k) $ at \eqref{eq:char-zn}, from  \eqref{eq:con}, we find that
  \[
    \frac{1}{|M'|} \sum_{j \in M'} f_{M'}^{(j)} =0
  \]
  Thus for all $M' \ni i$, we have $\frac{1}{|M'|} \sum_{j \in M'} f_{M'}^{(j)} =0$. Thus $T_i f^{(i)}=0$. That is, $f$ is normalized. By varying $i$, we also find that $\sum_{j \in M} f_M^{(j)}=0$ for all $M \neq \emptyset $. Also note that since $f$ is normalized, $(\sum_{j=1}^{n} f^{(j)})_{\emptyset} =0$ and $\sum_{j \notin M}  f_M^{(j)} =0$.
  Thus, we find that
  \[
        \sum_{M \subset N}(\sum_{j=1}^{n} f^{(j)})_M=\sum_{\substack{M \subset N \\ M \neq \emptyset} }(\sum_{j=1}^{n} f^{(j)}_M) =\sum_{\substack{M \subset N \\ M \neq \emptyset} }(\sum_{j \in M } f^{(j)}_M) =0
  \]
  Thus we have $\sum_{j=1}^{n} f^{(j)}=0$.
\end{proof}

\com{
Also it is easy to check that
\[
    range(\mathcal{GS})=(\mathcal{I} \cap \mathcal{N}), \,\, range(\mathcal{G}(I-\mathcal{S}))=(\mathcal{Z} \cap \mathcal{N} ), \,\,\, range(I- \mathcal{G})= \mathcal{B}
\]
}

\newpage

\bibliographystyle{elsarticle-harv}
\bibliography{evolutionary_games}

\newpage

\newpage
\section*{\large{Appendix: Only for online publication}}

\renewcommand{\thesection}{B}
\section{Other proofs}\label{appen:other-proofs}

%\begin{proof}[\textbf{Proof of Lemma \ref{lem:non-st}}]
%Suppose that $f$ satisfies \eqref{eq:pass}. Then, clearly, $f^{(i)}$ does not depend on $s_i$ for all $i$. Now let $f \in \mathcal{E}$. Then there exist $\zeta$ such that $f^{(i)}(s)=\zeta^{(i)}(s_{-i})$ for all $s$, which does not depend on $s_i$, for all $i$. Thus, by integrating, we see that \eqref{eq:pass} holds.
%\end{proof}

\begin{proof}[\textbf{Proof of Proposition \ref{prop:zero-convex}.}]\label{proof:zero-convex}
 We show (ii)((i) follows similarly).
Let $i$ and $s_{i}$ be fixed. Then from the discussion before the proposition, $w^{(i)}$ is concave in $s_i$ for all $i$. Thus there exists a Nash equilibrium. We next show that $\Phi(s)=\sum_{i}\max_{s_{i}\in S_{i}}w^{(i)}(s_{i},s_{-i})$
is strictly convex. Let $t',$ $t''\in S$ be given. Then $u',$ $u''\in S$
be given such that $w^{^{(i)}}(u_{i}',t_{-i}')=\max_{s_{i}\in S_{i}}w^{(i)}(s_{i},t_{-i}')$
and $w^{^{(i)}}(u_{i}'',t_{-i}'')=\max_{s_{i}\in S_{i}}w^{(i)}(s_{i},t_{-i}'')$
for all $i$. Let $\alpha\in(0,1)$ and $t^{*}$ be such that $w^{^{(i)}}(t_{i}^{*},((1-\alpha)t'+\alpha t'')_{-i})=\max_{s_{i}\in S_{i}}w^{(i)}(s_{i},((1-\alpha) t'+ \alpha t'')_{-i})$
for all $i$. Then we have
\begin{align*}
& (1-\alpha)\Phi(t')+\alpha \Phi(t'') =(1-\alpha)\sum_{i}w^{(i)}(u_{i}',t_{-i}')+\alpha \sum_{i}w^{^{(i)}}(u_{i}'',t_{-i}'')\\
 & \ge(1-\alpha)\sum_{i}w^{(i)}(t_{i}^{*},t_{-i}')+\alpha \sum_{i}w^{^{(i)}}(t_{i}^{*},t_{-i}'') >\sum_{i}w^{(i)}(t_{i}^{*},(1-\alpha)t'_{-i}+\alpha t''_{-i})\\
 & =\Phi((1-\alpha)t'+\alpha t'').
\end{align*}
Thus $\Phi_f(s)$ is strictly convex and the minimizer of $\Phi_f$ is unique. Since the Nash equilibrium is a minimizer of $\Phi_f$, the Nash equilibrium is unique.

\end{proof}

\begin{proof}[\textbf{Proof of Corollary \ref{cor:2p-zero-finite}.}]\label{proof:2p-zero-finite}
Let $f=w+h$, where $h$ is a non-strategic game. Then, $w^{(1)}(\sigma_{1},\sigma_{2})$
is convex in $\sigma_{2}$ and $w^{(2)}(\sigma_{1},\sigma_{2})$ is
convex in $\sigma_{1}.$ By Proposition \ref{prop:zero-convex}, the set of Nash equilibria is convex.
Suppose that $f$ has two distinct  Nash equilibria, $\rho^{*}$ and $\sigma^{*}$, where
$\rho^{*}\ne\sigma^{*}.$ Then, for all $t\in(0,1)$, $(1-t)\rho^{*}+t\sigma^{*}$ is a Nash equilibrium
since the set of Nash equilibria is convex. This contradicts Condition \hyperlink{con-N}{\textbf{(N)}} because of Lemma 2.2 in \citet{Quint&Shubik97}.
\end{proof}

\begin{proof}[\textbf{Proof of Proposition \ref{prop:n-p-pop}}]\label{proof:prop:n-p-pop}

We let
\[
    \mathcal{D} := \{ f \in \mathcal{L} : f^{(i)}(s) := \sum_{l \neq i} \zeta_l (s_{-l}) \text{ for all } i \}
\]

We first show that
\[
\mathcal{B}=(\mathcal{Z}+\mathcal{E})\cap(\mathcal{I}+\mathcal{E})=\mathbf{S}(\mathbf{P}(\mathcal{L}))+\mathcal{E}.
\]
Let $f\in(\mathcal{Z}+\mathcal{E})\cap(\mathcal{I}+\mathcal{E})$.
Then, $f=g_{1}+h_{1}$, for $g_{1}\in\mathcal{Z}$ and$\ h_{1}\in\mathcal{E}$,
and $f=g_{2}+h_{2}$, for $g_{2}\in\mathcal{I}$ and $h_{2}\in\mathcal{E}.$
Thus, we have
\begin{equation}
g_{1}+h_{1}=g_{2}+h_{2},\label{eq:pr}
\end{equation}
and applying $\mathcal{S}$ to (\ref{eq:pr}), we obtain
\[
f=\mathcal{S}(h_{1}-h_{2})+h_{2}.\text{ }
\]
Thus, since $h_{1}-h_{2}\in \mathbf{P} (\mathcal{L})$, $f\in \mathbf{S}(\mathbf{P}(\mathcal{L}))+\mathcal{E}$.
Conversely, let $f\in \mathbf{S}(\mathbf{P}(\mathcal{L}))+\mathcal{E}$. Obviously,
$f\in\mathcal{I}+\mathcal{E}$. In addition, $f=\mathbf{S}(\mathbf{P}(g))+h_{1}$,
for $\ g\in\mathcal{L}$ and $h_{1}\in\mathcal{E}.$ Thus,
\[
f=\mathbf{S}(\mathbf{P}(g))+h_{1}=-(\mathbf{I}-\mathbf{S})(\mathbf P(g))+\mathbf P(g)+h_{1}\in\mathcal{Z}+\mathcal{E}\text{.}
\]
This shows that
\[
(\mathcal{Z}+\mathcal{E})\cap(\mathcal{I}+\mathcal{E})=\mathbf{S}(\mathbf P(\mathcal{L}))+\mathcal{E}\text{.}
\]
Note that
\begin{align*}
\mathbf{S}(\mathbf{P}(\mathcal{L}))+\mathcal{E} & \mathcal{=}\{f:f^{(i)}=\sum_{l=1}^{n} \zeta_{l}(s_{-l})\text{ for some }\{\zeta_{l}\}_{l=1}^{n}\text{ and for all }i\}+\mathcal{E}\\
 & =\{f:f^{(i)}=\sum_{l\neq i}\zeta_{l}(s_{-l})\text{ for some }\{\zeta_{l}\}_{l=1}^{n}\text{ and for all }i\}+\mathcal{E}\\
 & =\mathcal{D}+\mathcal{E}.
\end{align*}

Now observe that
\begin{align*}
 & (\sum_{l\neq1}\zeta_{l}(s_{-l}),\sum_{l\neq2}\zeta_{l}(s_{-l}),\cdots,\sum_{l\neq n}\zeta_{l}(s_{-l}))\\
 & \sim(\sum_{l=1}^{n}\zeta_{l}(s_{-l}),\sum_{l=2}^{n}\zeta_{l}(s_{-l}),\cdots,\sum_{l=1}^{n}\zeta_{l}(s_{-l})).
\end{align*}
Hence, the first result follows from $\mathcal{D}+\mathcal{E}=(\mathcal{I}+\mathcal{E})\cap(\mathcal{Z}+\mathcal{E}$).
For the second result, observe that
\begin{align*}
(\sum_{l\neq1}\zeta_{l},\sum_{l\neq2}\zeta_{l},\cdots,\sum_{l\neq n}\zeta_{l}) & \sim(\sum_{l\neq1}\zeta_{l}-(n-1)\zeta_{1},\sum_{l\neq2}\zeta_{l}-(n-1)\zeta_{2},\cdots,\sum_{l\neq n}\zeta_{l}-(n-1)\zeta_{n})\\
 & =(\sum_{l\neq1}(\zeta_{l}-\zeta_{1}),\sum_{l\neq2}(\zeta_{l}-\zeta_{2}),\cdots,\sum_{l\neq n}(\zeta_{l}-\zeta_{n}))\\
 & =(\sum_{l>1}(\zeta_{l}-\zeta_{1}),\zeta_{1}-\zeta_{2},\zeta_{1}-\zeta_{3},\cdots,\zeta_{1}-\zeta_{n})\\
 & +(0,\sum_{l>2}(\zeta_{l}-\zeta_{2}),\zeta_{2}-\zeta_{3},\cdots, \zeta_{2}-\zeta_{n})+\cdots\\
 & +(0,0,\cdots,\sum_{l>n-1}(\zeta_{l}-\zeta_{n-1}),\zeta_{n-1}-\zeta_{n})\\
 & =\sum_{i=1}^{n}\sum_{l>i}^{n}(0,\cdots,0,\underbrace{-\zeta_{i}+\zeta_{j}}_{i-\text{th}},0,\cdots0,\underbrace{\zeta_{i}-\zeta_{j}}_{j-\text{th}},0,\cdots,0) \\
 &  =\sum_{i<j}(0,\cdots,0,\underbrace{-\zeta_{i}+\zeta_{j}}_{i-\text{th}},0,\cdots0,\underbrace{\zeta_{i}-\zeta_{j}}_{j-\text{th}},0,\cdots,0).
\end{align*}
\end{proof}

\begin{proof}[\textbf{Proof of Corollary \ref{prop:2p-both}}]\label{proof:2p-both}
(i)  This immediately follows from Proposition \ref{prop:n-p-pop}. (ii) From the second part of Corollary \ref{prop:2p-both},  $ (s^*_1, s^*_2) \in (\arg \max_{s_1} \zeta_2(s_1), \arg \max_{s_2} \zeta_1(s_2))$  is a Nash equilibrium.
If there are two distinctive maximizers, then since the set of maximizers is convex, there exist infinitely many Nash equilibria, contradicting Condition \hyperlink{con-N}{\textbf{(N)}} again by Lemma 2.2 in \citet{Quint&Shubik97}. Thus, the maximizer is unique and constitutes the strictly dominant Nash equilibrium.
\end{proof}

\begin{proof}[\textbf{Proof of Proposition \ref{prop:norm-zero-ci}}]\label{proof:norm-zero-ci}
Let  $ d \sigma_i(s_i) = \frac{1}{m(S_i)} d m_i (s_i)$ be player $i$' uniform mixed strategy. We define a uniform mixed strategy profile as a product measure of uniform mixed strategies: i.e.,
\[
	d \sigma (s) = \prod_i d \sigma_i(s_i).
\]
Let $i$ and $s_{i}$ be fixed. We show that
\[
	f^{(i)}(s_{i},\sigma_{-i})=0.
\]
Then, the desired result follows since $f^{(i)}(s_{i},\sigma_{-i})=0=f^{(i)}(\sigma_{i},\sigma_{-i})$
for all $i$ and $s_{i}$; hence, $f^{(i)}(\sigma_{i},\sigma_{-i})= \max_{s_i} f^{(i)}(s_i, \sigma_{-i})$ for all $i$. First, by the definition of the mixed strategy extension,
\[
f^{(i)}(s_{i},\sigma_{-i})=\int_{s_{-i}\in S_{-i}} f^{(i)}(s_{i},s_{-i})\prod_{l\neq i} d \sigma_l (s_l).
\]
If $f$ is a zero-sum normalized game, then
\[
f^{(i)}(s_{i},\sigma_{-i})= - \int_{s_{-i}\in S_{-i}} \sum_{j \neq i} f^{(j)}(s_{j},s_{-j}) \prod_{l\neq i} d \sigma_l (s_l) = - \sum_{j \neq i} \int_{s_{-i}\in S_{-i}}  f^{(j)}(s_{j},s_{-j}) \prod_{l\neq i} d \sigma_l (s_l) =0
\]
where the last equality follows from the normalization, $\int_{s_l \in S_l} f^{(l)}(s_l, s_{-l}) d \sigma_l (s_l) =0$ for all $l$ and Fubini's Theorem. If $f$ is an identical interest game, then similarly
\[
f^{(i)}(s_{i},\sigma_{-i})=\int_{s_{-i}\in S_{-i}} v(s_{i},s_{-i})\prod_{l\neq i} d \sigma_l (s_l) =0
\]
where the last equality again follows from the normalization, $\int_{s_l \in S_l} v(s_l, s_{-l}) d \sigma_l (s_l) =0$ for all $l$.
Thus, we obtain the desired result.
\end{proof}

\renewcommand{\thesection}{C}
\section{Details for Section \ref{sec:app} \label{appen:app}}

\subsection{Finite strategy games}
\begin{table}[t]
\centering
%\setstretch{1.2}
\scalefont{0.9}

\begin{tabular}{c|c|c|c}
\hline
 & Identity payoff   & Zero-sum  &   Both Potential  \\
 & Normalized  & Normalized  &  and Zero-sum \\
 & $\mathcal{I}\cap\mathcal{N}$  & $\mathcal{Z}\cap\mathcal{N}$  & $\mathcal{B}$ \\
 \hline
Dimensions & $\frac{(l-1)l}{2}$  & $\frac{(l-2)(l-1)}{2}$  & $2 l-1$ \\
\hline
Basis Games
& $\begin{pmatrix}
    1 & -1\\
    -1 & 1
\end{pmatrix}$

& $\begin{pmatrix}
    0 & -1 & 1\\
    1 & 0 & -1\\
    -1 & 1 & 0
\end{pmatrix}$

& $\begin{pmatrix}
    1 & 1\\
    0 & 0
\end{pmatrix}$
 $\begin{pmatrix}
    1 & 0\\
    1 & 0
\end{pmatrix}$
$\begin{pmatrix}
    0 & 1\\
    0 & 1
\end{pmatrix}$
\\
\hline
\end{tabular}
\caption{\textbf{Dimensions of subspaces and basis games for two-player symmetric games}}
\label{tab:basis}
\end{table}

\begin{lem} \label{appen:lem-c1}
We have the following results:\\
  (i) The set of games $\{S^{(ij)}\}_{i=1,\cdots, l, j>i}$ forms a basis set for $\mathcal{I} \cap \mathcal{N}$ \\
  (ii) The set of games $\{Z^{(ij)}\}_{i=2,\cdots, l, j>i}$ forms a basis set for $\mathcal{I} \cap \mathcal{N}$ \\
  (iii) The set of games $\{D^{(i)}\}_{i=1,\cdots, l-1}, \{E^{(i)}\}_{i=1,\cdots, l}$ forms a basis set for $\mathcal{B}$
\end{lem}
\begin{proof}
 (i) We note that there are precisely  $\frac{l(l-1)}{2}$ number of different $S^{(ij)}$'s. Thus, we only need to show that these $S^{(ij)}$'s are independent.  Let $S$ be
 \[
    S:= \sum_{i=1}^{l} \sum_{j=i+1}^l \alpha^{(ij)} S^{(ij)}.
 \]
 Then it is easy to check that $S_{ij}=\alpha^{(ij)}$.Thus if $S=\mathbf{O}$, then $\alpha^{(ij)}=0$ for all $i, j$. \\
 (ii) Again we note that there are precisely $\frac{(l-2)(l-1)}{2}$ number of different $Z^{(ij)}$'s. Let $Z$ be
 \[
    Z:= \sum_{i=2}^{l} \sum_{j=i+1}^l \zeta^{(ij)} Z^{(ij)}.
 \]
Then it is also easy to check that $Z_{ij}=-\zeta^{(ij)}$.Thus if $Z=\mathbf{O}$, then $\zeta^{(ij)}=0$ for all $i, j$. \\
(iii) Again we note that there are precisely $l-1$ number of different $D^{(i)}$'s and $l$ number of different $E^{(i)}$'s. Let $K$ be
\[
    K=  \sum_{i=1}^{l-1}\delta_{i} D^{(i)}+ \sum_{i=1}^{l}\eta_{i} E^{(i)}
\]
Then if $K= \mathbf{O}$, then $\eta_i =0$ for all $i$ (because the last row of $K$ is given by $(\eta_1, \cdots, \eta_l)$) and this, in turn, implies that $\delta_i=0$ for all $i$.
\end{proof}
We would have the following results.
\begin{prop} \label{appen:prop:app1}
  We have the following results: \\
  (i) Suppose that $\gamma_{ij}<0$ for all $i,j$. Then $\#(G) = 1$. \com{In addition, if $D=\mathbf{0}$, then the uniform mixed strategy is a unique NE.} \\
  (ii) Suppose that $\gamma_{ij}>0$ for all $i,j$. Suppose that $\delta_i\geq 0$ for all $i$ and $\underline \gamma > \bar \delta + \bar \zeta$. Then $\#(G) = 2^l-1$.
\end{prop}

\begin{proof}[\textbf{Proof of Proposition \ref{appen:prop:app1} (ii)}]
  We will show that $G$ satisfies the total band wagon property defined by \citet{Kandori98}. Then for any $A \subset \{1, 2, \cdots, l \}$, there exists a unique Nash equilibrium, which is completely mixed in $A$ and thus there exist precisely $2^l-1$ Nash equilibria (See  \citet{Kandori98}). Thus, we will show that for all $q \in \Delta$, $BR(q) \subset \Sigma_q$, where $BR(q)$ is the set of all pure strategy best responses for $q$.  Suppose that there exists $q \in \Delta$ such that $BR(q) \not \subset \Sigma_q$. Then we must have $\Sigma_q \neq \{1, \cdots, l\}$ (the set of all pure strategies) and there exists $k \not \in \Sigma_q$ such that
  \[
    e_k \cdot G q \geq q \cdot G q
  \]
  We define  $\gamma_{ji}=\gamma_{ij}$ for $j > i$. First observe that we have
  \[
    q \cdot S q = \sum_{i <j } \gamma_{ij}(q_i - q_j)^2, \,\,\,\, e_k \cdot S q = \sum_{j<k} \gamma_{jk} (q_k - q_j) + \sum_{j>k} \gamma_{kj}(q_k - q_j) = \sum_{j \neq k} \gamma_{kj}(q_k -q_j)
  \]
  Next we define $\zeta_{ji} = - \zeta_{ij} \text{ for } j > i$. Again observe that we have
  \[
    q \cdot Z q = 0,  \,\,\, \,\, e_1 \cdot S q = \sum_{i<j} \zeta_{ij} (q_j - q_i), \,\,\, \,\,e_k \cdot S q = \sum_{j \neq k } \zeta_{ij} (q_1 - q_j)
  \]
  Thus
  \begin{align*}
     & e_1 \cdot S q \leq \sum_{i<j} \max_{i<j} |\zeta_{ij}| |q_j - q_i| \leq  \max_{i<j} |\zeta_{ij}| \sum_{i<j} \leq  \bar \zeta \\
     & e_k \cdot S q \leq \sum_{j \neq k} \max_{i<j} |\zeta_{ij}| |q_1 - q_j| \leq  \max_{i<j} |\zeta_{ij}| \sum_{i<j} \leq  \bar \zeta
  \end{align*}
  Next we let $d=(\delta_1, \cdots, \delta_l)^T$ and find that
  \[
    q \cdot D q = q \cdot d \geq 0,  \,\,\,\, e_k \cdot D q = \delta _k
  \]
  since $\delta_i \geq 0$ for all $i$.
    Then since $k \not \in \Sigma_q$ so $q_k=0$. Thus
  \begin{align*}
     & 0\leq q \cdot S q + q \cdot Z q  + q \cdot D q  = q \cdot G q \leq e_k \cdot G q = \sum_{j \neq k} \gamma_{kj}(q_k -q_j)+\bar \zeta + \bar \delta \\
     &     =- \sum_{j \neq k}  \gamma_{kj} q_j + \bar \zeta + \bar \delta  \leq -\sum_{j \in \Sigma_q}\gamma_{kj} q_j + \bar \zeta + \bar \delta \leq - \underline \gamma + \bar \zeta + \bar \delta <0
  \end{align*}
  which is a contradiction. The last inequality in the above follows from
  \[
   \sum_{ j \in \sum_q } \gamma_{kj} q_j \geq \sum_{ j \in \sum_q } \min_{ j \in \sum_q} \gamma_{kj} q_j \geq \sum_{ j \in \sum_q } \min_{ j \neq k} \gamma_{kj} q_j  \geq \min_{j \neq k} \gamma_{kj} \sum_{j \in \sum_q} q_j = \min_{j \neq k} \gamma_{kj} \geq \underline \gamma
  \]

\end{proof}

To show (i) of Proposition \ref{appen:prop:app1}, we recall the following definitions (from \citet{Hofbauer09}).
\begin{defn}
  We say that  \\
  (i) a symmetric game $G$ is stable if $(q-p) \cdot G (q-p) \leq 0 \text{ for all } p, q \in \Delta$ \\
  (ii) a symmetric game $G$ is strictly stable if $(q-p) \cdot G (q-p) < 0 \text{ for all } p \neq q \in \Delta$ \\
  (iii) a symmetric game $G$ is null-stable if $(q-p) \cdot G (q-p) = 0 \text{ for all } p, q \in \Delta$
\end{defn}
Next we have the following well-known observation.
\begin{lem}
If $p$ satisfies
\[
    (q-p) \cdot G q <0 \text{ for all } q \neq p \in \Delta
\]
then $p$ is a unique Nash equilibrium for  a symmetric game, $G$.
\end{lem}
\begin{proof}
  Since $G$ is finite, there exist a Nash equilibrium, say $p'$. We will show that $p' =p$. Suppose that $p' \neq p$. Then we find
  \[
    p \cdot G p' > p' \cdot G p'
  \]
  which shows that $p'$ is not a Nash equilibrium, a contradiction. Thus we must have $p'=p$. And this also shows that there cannot exist any other Nash equilibrium.
\end{proof}

We have the following characterization for the strict stability of $G$.
\begin{lem}
  Suppose that $G$ is given by \eqref{eq:decomp-rep}. \\
  (i) $G$ is strictly stable if $\gamma_{ij} < 0$ for all $i < j$. \\
  (iii) $G$ is null stable if  $\gamma_{ij} = 0$ for all $i < j$.
\end{lem}
\begin{proof}
(i) Let $T\Delta$ be the tangent space of $\Delta$ and $ z \neq 0$ and $z \in T \Delta$. Then since $G=S+Z+B$, $Bz=\mathbf{0}$,and $z\cdot Z z = 0$, we have
\[
    z \cdot G z = z \cdot S z = z \cdot \sum_{i<j} \gamma_{ij} S^{(ij)} z = \sum_{i<j} z \cdot S^{(ij)} z = \sum_{i<j} \gamma_{ij}(z_i - z_j)^2 \leq 0
\]
because $\gamma_{ij} <0$ for all $i< j$.
If $\sum_{i<j} \gamma_{ij}(z_i - z_j)^2 = 0$, $\gamma_{ij}(z_i - z_j)^2=0$ for all $i<j$ and thus $z_i - z_j=0$ for all $i>j$ which is a contradiction to $z \neq 0$. Thus we have $z \cdot G z <0$. (ii) Let $z \in T\Delta$. If $\gamma_{ij}=0$, then we again have
\[
    z \cdot G z = \sum_{i<j} \gamma_{ij}(z_i - z_j)^2 =0
\]
\end{proof}

\begin{proof}[\textbf{Proof of Proposition \ref{appen:prop:app1} (i)}]
Suppose that $\gamma_{ij} <0$ for all $i>j$. Since $G$ is a finite game, there exists a NE, $p^*$, for $G$. Since $G$ is strictly stable, for all $q \neq p^*$, we have
\[
    (q-p^*) \cdot G (q-p^*) < 0 \text{ for all } q \neq p^* \in \Delta
\]
Thus
\[
    (q - p^*) \cdot Gq < (q - p^*) \cdot G p^* \leq 0
\]
where the last inequality follows from $p^*$ is a NE. Thus we find that $\#(G)=1$.
\end{proof}

\subsection{Contest games}

First note that $s=(0,0, \cdot, 0)$ cannot be a Nash equilibrium since any player $i$ can deviate to $s_i>0$. Thus we let
\[
    S = \{ (s_1, \cdots, s_n): s_i \geq 0 \text{ for all } i,\text{  and }  s_j > 0 \text{ for some } j \}.
\]
\begin{lem} \label{lem:contest1}
    Let $i$ be fixed and $s_i \geq 0$. Then $w^{(i)}(s_i, \cdot) : S_{-i} \rightarrow \mathbb{R}$ is convex.
\end{lem}
\begin{proof}
We will show that $p^{(i)}(s_i, \cdot) : S_{-i} \rightarrow \mathbb{R}$ is convex and then the desired result follows. Suppose that $s_i > 0$. Define $g: s_{-i} \mapsto \sum_{l \neq i} s_l$ and $h: t \mapsto \frac{s_1}{s_1 +t}$.  Then $g$ is convex, $h$ is convex and decreasing, thus $p^{(i)}(s_i, \cdot)$ is convex.  If $s_i=0$, then
\[
    p^{(i)} (0, s_{-i}) = \begin{cases}
                            \frac{1}{n}, & \mbox{if } s_{-i} = 0 \\
                            0, & \mbox{otherwise}.
                          \end{cases}
\]
Thus $p^{(i)} (0, \cdot)$ is convex for all $s_{-i} \neq 0$. Thus we obtain the desired result.
\end{proof}
Since it is known that the rent-seeking game admits a Nash equilibrium, Proposition \ref{prop:zero-convex} and Lemma \ref{lem:contest1} show that the set of Nash equilibrium for the rent-seeking game is convex.
Let $b_i^\circ (s_{-i})$ be the best response when an interior solution occurs. That is,  $b_i^\circ (s_{-i})$ satisfies
\[
    c_i (b_i^\circ(s_{-i}) + \sum_{l \neq i } s_l )^2 = \sum_{l \neq i } s_l.
\]
Then
\[
    \Phi_f(s) = \sum_{l=i } w^{(i)} (\max \{ b_i^\circ (s_{-i}), 0 \}, s_{-i})
\]
For $P \subset \{1, \cdots, n\}$ such that $|P|\geq 2$, we define
\[
    w_P^{(i)}(s_i, s_{-i}) =(p^{(i)}(s_i, s_{-i}) - \frac{1}{|P|} )  - \frac{1}{|P|-1} \sum_{j \neq i, j \in P} (c_i s_i - c_j s_j)
\]
for $s \in \prod_{i=1} S_i$.

\begin{lem} \label{lem:contest2}
    Suppose that $s^*$ is a Nash equilibrium and $s_i ^* >0$ for all $i \in P$ and $s_i^*=0$ for all $i \not \in P$ where $P \subset \{1, \cdots, n \}$. Let $s_P^* = (s_i^*)_{i \in P}$. Then we have
    \[
        \Phi_f(s^*) = \sum_{i \in P} w^{(i)}_P( b_i^\circ(s^*_{P, -i}), s_{P,-i}^*)
    \]
\end{lem}
\begin{proof}
Let $P \subset \{1, \cdots, n \}$ such that for all $i \in P$,  $s_i >0, b_i(s_{-i})>0$ and for all $i \not \in P$, $s_i = 0, b_i(s_{-i})=0$. Then we have
\begin{align*}
  \sum_{i=1}^n w^{(i)}(b_i(s_{-i}), s_{-i})= &  \sum_{i \in P} w^{(i)}(b^0_i(s_{-i}), s_{-i}) + \sum_{i \not \in P} w^{(i)}(0, s_{-i})\\
  = & \sum_{i \in P} [ p^{(i)}(b_i^\circ (s_{-i}), s_{-i}) -c_i b^\circ_i(s_{-i})) + \frac{1}{n-1} \sum_{j \neq i, j \in P} c_j s_j]  \\
    + &  \sum_{i \not \in P} \frac{1}{n-1} \sum_{j \neq i, j \in P } c_j s_j -1 \\
  = & \sum_{i \in P} w^{(i)}_P( b_i^\circ(s_{P, -i}), s_{P,-i})
\end{align*}
where we use $b^\circ_i(s_{-i}) =b^\circ_i(s_{P, -i})$, since $b^\circ_i(s_{-i})$ depends $\sum_{l \neq i}s_l$.
Using this, we obtain the desired result.
\end{proof}
Lemma \ref{lem:contest2} leads us to define
\begin{align}
  \Phi^\circ_P(s) := & \sum_{i \in P} w^{(i)}_P( b_i^\circ(s_{-i}), s_{-i}) \notag \\
    = & \frac{1}{|P|-1}\sum_{i \in P} c_i  (\sum_{i \in P } \sum_{l \neq  i} s_l) - 2 \sum_{i \in P } \sqrt{c_i} \sqrt{ \sum_{ l \neq i } s_l } + |P|-1 \label{eq:phi-circ-p}
\end{align}
for $s \in S(P)$.

\begin{lem} \label{lem:contest3}
    $\Phi^\circ_P(s): S(P)  \rightarrow \mathbb{R}$ is strictly convex.
\end{lem}
\begin{proof}
  From \eqref{eq:phi-circ-p}, it is enough to consider the following function:
  \[
    \Psi(s):= \sum_{i=1}^{n} \alpha_i h( \sum_{l \neq i} s_l)
  \]
  where $\alpha_i >0$ and $h$ is strictly convex. We will show that $\Psi$ is strictly convex. Let $s, t \in S_+$ and $s \neq t$ and $\rho \in (0, 1)$. Then for some $k$, $\sum_{l \neq k} s_l \neq \sum_{l \neq k} t_l$. Otherwise, if $\sum_{l \neq i} s_l =\sum_{l \neq i} t_l$, then $\sum_{l} s_l = \sum_l t_l$, which again implies $s_i=t_i$, a contradiction. Thus from the strict convexity of $h$, we have
  \[
       h(  (1-\rho) \sum_{l\neq k}s_l + \rho \sum_{l\neq k} t_l) > (1-\rho) h(\sum_{l\neq k}s_l) + \rho h(\sum_{l\neq k} t_l)
  \]
and
  \[
    \Psi( (1-\rho) s + \rho t) = \sum_{i=1}^{n} \alpha_i h( \sum_{l \neq i} (1-\rho) s_l + \rho t_l) > \sum_{i=1}^{n} \alpha_i (1-\rho) h(\sum_{l\neq i}s_i) + \rho h(\sum_{l\neq i} t_l) =(1-\rho)\Psi(s) + \rho \Psi(t)
  \]
\end{proof}

\begin{lem} \label{lem:contest4}
    Suppose that $s^*$ and $t^*$ are Nash equilibria for a rent-seeking game defined in \eqref{eq:r-s-games} such that $s_i^*, t_i^* >0$ for all $i \in P$ and $s_i^*, t_i^* =0$ for all $ i \not \in P$ for some $P \subset \{1,\cdots, n\}$. Then $s^*=t^*$.
\end{lem}
\begin{proof}
  Suppose that $s^*$ and $t^*$ are Nash equilibria for $\Gamma^{(n)}$ such that $s_i^*, t_i^* >0$ for all $i \in P$ and $s_i^*, t_i^* =0$ for all $ i \not \in P$. Let $s_P^*:= (s_i^*)_{i \in P}$ and $t_P^*:= (t_i^*)_{i \in P}$.
    Then we have
    \[
        0 = \Phi(s^*) = \Phi^\circ_P(s_P^*) \text{ and } 0 = \Phi(t^*) = \Phi^\circ_P(t_P^*).
    \]
    Since $\Phi^\circ_P(s_P) \geq 0$ for all $s_P \in \prod_{i \in P} S_i$ (This is to be shown) and the strict convexity of $\Phi_P^\circ(s)$ implies that the minimum is unique. We have $s_P^* = t_P^*$, and thus $s^* = t^*$.
\end{proof}

\begin{prop} \label{appen:prop-app2}
    The Nash equilibrium for the rent-seeking game defined in \eqref{eq:r-s-games} is unique.
\end{prop}
\begin{proof}
  Suppose that $s^*$ and $t^*$ such that $s^* \neq t^*$ are Nash equilibria.  Let $P':=\{i: s_i^*>0 \}$ and $P'':= \{i: t_i^*>0 \}$. Then from Lemma \ref{lem:contest4}, we must have $P' \neq P$. Since the set of Nash equilibria is convex by Lemma \ref{lem:contest1}, $\rho s^* + (1-\rho) t^*$ is a Nash equilibria for all $\rho \in [0, 1]$. Then for $0 < \rho < 1$,  $(\rho s^* + (1-\rho) t^*)_i >0 $ if $ i \in P'$ and $(\rho s^* + (1-\rho) t^*)_j >0 $ if $ j \in P''$. Thus there are infinitely many Nash equilibrium for the set $P' \cup P''$, which is contradiction to Lemma \ref{lem:contest4}.
\end{proof}

\renewcommand{\thesection}{D}
\section{Existing decomposition results}\label{sec:existing}

Our decomposition methods extend two kinds of existing results:  (i) \citet{Kalai10}, (ii) \citet{HandR11, Candogan2011}. First, \citet{Kalai10} decompose normal form games with incomplete information and study the implications for Bayesian mechanism designs. Their decomposition is based on the orthogonal decomposition $\mathcal{L}=\mathcal{I} \oplus\mathcal{Z}$ in equation (\ref{eq:1st-two-d1}).
Second, \citet{HandR11} similarly provide decomposition results based on the orthogonality between identical interest and zero-sum games and between normalized and non-strategic games, mainly focusing on finite games. \citet{Candogan2011} decompose finite strategy games into three components: a potential component, a nonstrategic component, and a harmonic component. When the numbers of strategies are the same for all players, harmonic components are the same as zero-sum normalized games, and their harmonic games, in this case, refer to games that are strategically equivalent to zero-sum normalized games. Also, their potential component is obtained
by removing the non-strategic component from the potential part ($\mathcal{I}+\mathcal{E}$)
of the games. Note that we can change our definition of zero-sum normalized games to their definition of harmonic games, with all the decomposition results remaining unchanged. Thus, their three-component decomposition of finite strategy
games follows from Theorem \ref{thm:main}, $\mathcal{L}=(\mathcal{I}+\mathcal{E})\oplus(\mathcal{Z}\cap\mathcal{N})$
(see the proof of Corollary \ref{cor: Can} for more detail).
\begin{cor}
\label{cor: Can}We have the following decomposition.
\[
\mathcal{L}=\underbrace{((\mathcal{\mbox{\ensuremath{\mathcal{I}}}}+\mathcal{E})\cap\mathcal{N})}_{\text{Potential Component}}\oplus\underbrace{\mathcal{E}}_{\substack{\text{Nonstrategic}\\
\text{Component}
}
}\oplus\underbrace{(\mathcal{Z}\cap\mathcal{N})}_{\substack{\text{Harmonic}\\
\text{Component}
}
}
\]

\end{cor}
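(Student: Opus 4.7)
The plan is to obtain this three-component decomposition by composing the two orthogonal decompositions established earlier, namely $\mathcal{L} = (\mathcal{C}+\mathcal{E}) \oplus (\mathcal{N} \cap \mathcal{Z})$ from Proposition \ref{prop:decomp-pot} and $\mathcal{L} = \mathcal{N} \oplus \mathcal{E}$ from Proposition \ref{prop:elementary-decomp}. Given any $f \in \mathcal{L}$, the first step is to apply Proposition \ref{prop:decomp-pot} to write $f = p + h$ uniquely with $p \in \mathcal{C}+\mathcal{E}$ and $h \in \mathcal{N}\cap\mathcal{Z}$. The second step is to apply Proposition \ref{prop:elementary-decomp} to the potential part, writing $p = p_{N} + p_{E}$ uniquely with $p_{N} \in \mathcal{N}$ and $p_{E} \in \mathcal{E}$. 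The key observation, which requires essentially no work, is that $p_{N} \in (\mathcal{C}+\mathcal{E})\cap \mathcal{N}$: since $p \in \mathcal{C}+\mathcal{E}$ and $p_{E} \in \mathcal{E} \subset \mathcal{C}+\mathcal{E}$, we get $p_{N} = p - p_{E} \in \mathcal{C}+\mathcal{E}$, and $p_{N} \in \mathcal{N}$ by construction. This yields the desired existence of the decomposition $f = p_{N} + p_{E} + h$.

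Next, I would verify the three pairwise orthogonality relations needed for the direct-sum notation. First, $(\mathcal{C}+\mathcal{E})\cap \mathcal{N} \perp \mathcal{E}$ and $\mathcal{N}\cap\mathcal{Z} \perp \mathcal{E}$ both follow immediately from $\mathcal{N} \perp \mathcal{E}$ (Proposition \ref{prop:elementary-decomp}), since the first factor in each pair is contained in $\mathcal{N}$. Second, $(\mathcal{C}+\mathcal{E})\cap \mathcal{N} \perp \mathcal{N}\cap\mathcal{Z}$ follows from the stronger statement $(\mathcal{C}+\mathcal{E}) \perp (\mathcal{N}\cap\mathcal{Z})$ already contained in Proposition \ref{prop:decomp-pot}.

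Uniqueness is then automatic: if $f = a + b + c = a' + b' + c'$ with $a, a' \in (\mathcal{C}+\mathcal{E})\cap \mathcal{N}$, $b, b' \in \mathcal{E}$, $c, c' \in \mathcal{N}\cap\mathcal{Z}$, the mutual orthogonality just established together with $\|(a-a')+(b-b')+(c-c')\|^{2} = \|a-a'\|^{2} + \|b-b'\|^{2} + \|c-c'\|^{2} = 0$ forces each difference to vanish. I do not anticipate any serious obstacle here, as all the technical heavy lifting (in particular the closedness of $\mathcal{C}+\mathcal{E}$ in the continuous-strategy case, needed for Proposition \ref{prop:decomp-pot}) has already been carried out in the earlier propositions; this corollary is essentially a bookkeeping consequence of nesting the two decompositions.
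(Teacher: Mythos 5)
Your proof is correct and follows essentially the same route as the paper: both reduce the corollary to Proposition \ref{prop:decomp-pot} by splitting the potential component as $((\mathcal{C}+\mathcal{E})\cap\mathcal{N})\oplus\mathcal{E}=\mathcal{C}+\mathcal{E}$ using the decomposition $\mathcal{L}=\mathcal{N}\oplus\mathcal{E}$ --- the paper does this with the explicit projection $\Lambda$, while you note that $p_N=p-p_E\in\mathcal{C}+\mathcal{E}$ because $\mathcal{E}\subset\mathcal{C}+\mathcal{E}$, which is the same observation. Your explicit verification of the pairwise orthogonality relations and of uniqueness is sound, though it is already subsumed by the general Hilbert-space results the paper cites.
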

\begin{proof}
This proof follows from Theorem \ref{thm:main} by showing that $ $$((\mathcal{I}+\mathcal{E})\cap\mathcal{N})\oplus\mathcal{E}=\mathcal{I}+\mathcal{E}$.
First, observe that $(\mathcal{I}+\mathcal{E})\cap\mathcal{N}\subset\mathcal{I}+\mathcal{E}$,
which implies that $((\mathcal{I}+\mathcal{E})\cap\mathcal{N})\oplus\mathcal{E}\subset\mathcal{I}+\mathcal{E}$.
Now, let $f\in\mathcal{I}+\mathcal{E}$. Then, $f=g+h,$ where $g\in\mathcal{I}$,
$h\in\mathcal{E}$, and $g=(v,v,\cdots,v).$ Then, by applying the
map, $\mathbf{P}$, we find that $f=\mathbf{P}(f)+(I-\mathbf{P})(f)$. Obviously,
$\mathbf{P}(f)\in\mathcal{E}$. In addition, $(I-\mathbf{P})(f)=(I-\mathbf{P})(g)=(v-T_{1}v,\,v-T_{2}v,\cdots,\,v-T_{n}v)\in\mathcal{I}+\mathcal{E}.$
Thus, $(I-\mathbf{P})(f)\in(\mathcal{I}+\mathcal{E})\cap\mathcal{N}$.
\end{proof}
\com{\noindent Note that Corollary \ref{cor: Can} not only reproduces the result of \citet{Candogan2011}, when the number of strategies of the players is the same, but also extends it to the space of games with continuous strategy sets.
}

\citet{Ui00} provides the following characterization for potential
games:
\begin{equation} \label{eq:Ui-con}
f\text{ is a potential game if and only if }f^{(i)}=\sum_{\substack{M\subset N\\
M\ni i
}
}\xi_{M}\textnormal{ for some\,}\{\xi_{M}\}_{M\subset N}\textnormal{ for all }i
\end{equation}
where $\xi_{M}$ depends only on $s_{l}$, with $l\in M$. Let
\[
    \mathcal{D} := \{ f \in \mathcal{L} : f^{(i)}(s) := \sum_{l \neq i} \zeta_l (s_{-l}) \text{ for all } i \}.
\]
\noindent From our decomposition results, we have $\mathcal{D} \subset \mathcal{I}+\mathcal{E}$ and $\mathcal{E} \subset \mathcal{I} + \mathcal{D}$. In particular, the second inclusion holds because
\[
    \zeta_{i}(s_{-i}) =\sum_{l=1}^{n} \zeta_l (s_{-l}) - \sum_{l \neq i} \zeta_{l}(s_{-l}).
\]
Thus, $\mathcal{D} \subset \mathcal{I}+\mathcal{E}$ implies  that $\mathcal{I}+\mathcal{D}+\mathcal{E} \subset \mathcal{I}+\mathcal{E}$ and $\mathcal{E} \subset \mathcal{I} + \mathcal{D}$ implies $\mathcal{I}+\mathcal{D}+\mathcal{E} \subset \mathcal{I}+\mathcal{D}$. From this, we find
\begin{equation}\label{eq:ui-con-show}
  \mathcal{I}+\mathcal{D}= \mathcal{I}+\mathcal{D}+\mathcal{E}=  \mathcal{I}+\mathcal{E}
\end{equation}
Note that all games in $\mathcal{I}$ and in $\mathcal{D}$ satisfy Ui's condition in \eqref{eq:Ui-con}; hence, games in $\mathcal{I} + \mathcal{D}$ satisfy Ui's condition. Then, equalities in \eqref{eq:ui-con-show} show that the condition in \eqref{eq:Ui-con} is a necessary condition for potential games. The sufficiency of Ui's
condition is deduced by adding the non-strategic game
\[
(\sum_{\substack{M\subset N\\
M\not\ni1
}
}\xi_{M},\sum_{\substack{M\subset N\\
M\not\ni2
}
}\xi_{M},\cdots,\sum_{\substack{M\subset N\\
M\not\ni n
}
}\xi_{M})
\]
to game $f$ satisfying Ui's condition.

As explained in the main text, \citet{Sandholm10} decomposes $n$-player finite strategy games into
$2^{n}$ components using an orthogonal projection. When the set of games consists of symmetric games with $l$ strategies, the orthogonal projection is given by $\Gamma:=I-\frac{1}{l}\mathbf{1}\mathbf{1}^{T}$
, where $I$ is the $l \times l$ identity matrix and $\mbox{\textbf{1}}$ is the column vector consisting of all
1's. Using $\Gamma$, we can,
for example, write a given symmetric game, $A$, as
\begin{equation}
A=\underset{=(\mathcal{I}\cap\mathcal{N})\oplus(\mathcal{Z}\cap\mathcal{N})}{\underbrace{\Gamma A\Gamma}}+\text{ }\underbrace{(I-\Gamma)A\Gamma+\Gamma A(I-\Gamma)+(I-\Gamma)A(I-\Gamma)}_{=\mathcal{B}}.\footnote{In fact, for two player symmetric game, using Table \ref{tab:component} we can verify that
\[
f_{\mathcal B}=(I  - (I-T_1)(I-T_2))f^{(1)}, \quad  f_{\mathcal{I}\cap\mathcal{N}}+ f_{\mathcal{Z} \cap \mathcal{N}}=(I-T_1)(I-T_2) f^{(1)}.
\]}\label{eq:sand_decomp}
\end{equation}
Thus, our decompositions show that $\Gamma A\Gamma$ can be decomposed
further into games with different properties---identical interest normalized
games and zero-sum normalized games---and every game belonging to the
second component in (\ref{eq:sand_decomp}) is strategically equivalent to both an identical interest game and a zero-sum game. \citet{Sandholm10}
also shows that a two-player game, ($A,B)$, is potential if and only
if $\Gamma A\Gamma=\Gamma B\Gamma.$ If $P=(P^{(1)},P^{(2)})$ is a non-strategic
game, it is easy to see that $\Gamma P^{(1)}=O$ and $P^{(2)}\Gamma=O$,
where $O$ is a zero matrix. Thus,
the necessity of the condition $\Gamma A\Gamma=\Gamma B\Gamma$ for potential
games is obtained. Conversely, if $\Gamma A\Gamma=\Gamma B\Gamma$,
then game $(A,B)$ does not have a component belonging to $\mathcal{Z}\cap\mathcal{N}$
because $(\Gamma A\Gamma,\Gamma B\Gamma)\in(\mathcal{I}\cap\mathcal{N})\oplus(\mathcal{Z}\cap\mathcal{N}).$
Thus, ($A$,$B$) is a potential game.

\end{document}